\newtheorem{theorem}{Theorem}[section]
\newtheorem{proposition}{Proposition}[section]
\newtheorem{lemma}{Lemma}[section]
\newcommand{\ab}{\left[ \alpha/\beta \right]}
\newcommand{\abt}{\left[ \alpha/\beta \right]_{\mathrm{T}}}
\newcommand{\abn}{\left[ \alpha/\beta \right]_{\mathrm{N}}}
\newcommand{\ma}{\xi}
\newcommand{\mat}{\xi_{\mathrm{T}}}
\newcommand{\man}{\xi_{\mathrm{N}}}
\newcommand{\ca}{\theta}
\newcommand{\cat}{\theta_{\mathrm{T}}}
\newcommand{\can}{\theta_{\mathrm{N}}}
\newcommand{\en}{\bar{b}}
\newcommand{\sfe}{\delta_{\mathrm{eff}}}
\DeclareMathOperator*{\argmax}{arg\,max}
\title{A Mathematical Programming Approach to the Fractionation Problem in Chemoradiotherapy}
\date{}
\author[1]{Ehsan Salari}
\author[2]{Jan Unkelbach}
\author[2]{Thomas Bortfeld}
\affil[1]{Department of Industrial and Manufacturing Engineering, Wichita State University, Wichita, KS, 67260}
\affil[2]{Department of Radiation Oncology, Massachusetts General Hospital and Harvard Medical School, Boston, MA, 02114}
\begin{document}
\maketitle
\abstract{In concurrent chemoradiotherapy, chemotherapeutic agents are administered during the course of radiotherapy to enhance the primary tumor control. However, that often comes at the expense of increased risk of normal-tissue complications. The additional biological damage is mainly attributed to two mechanisms of action, which are the independent cytotoxic activity of chemotherapeutic agents and their interactive cooperation with radiation. The goal of this study is to develop a mathematical framework to obtain drug and radiation administration schedules that maximize the therapeutic gain for concurrent chemoradiotherapy. In particular, we analyze the impact of incorporating these two mechanisms into the radiation fractionation problem. Considering each mechanism individually, we first derive closed-form expressions for the optimal radiation fractionation regimen and the corresponding drug administration schedule. We next study the case in which both mechanisms are simultaneously present and develop a dynamic programming framework to determine optimal treatment regimens. Results show that those chemotherapeutic agents that interact with radiation may change optimal radiation fractionation regimens. Moreover, administration of chemotherapeutic agents possessing both mechanisms may give rise to optimal non-stationary fractionation schemes.}\\

\noindent \textbf{Keywords:} Chemoradiotherapy, fractionation, biologically-effective dose, mathematical optimization, dynamic programming

\section{Introduction} \label{sec: introduction}
In radiotherapy, high-energy radiation is used to eradicate cancer cells by damaging their deoxyribonucleic-acid (DNA) molecule. As the radiation beam passes through the patient it damages both cancerous and normal cells along its path. The majority of radiotherapy treatment plans are divided into daily treatment \emph{fractions} and delivered over the course of one to several weeks. This allows \emph{late-responding} normal tissues to partially \emph{repair} the radiation-induced damage between treatment fractions. Moreover, it is suggested that the core of large tumors may be oxygen deprived, so-called \emph{hypoxic}, and thus more radio-resistant \citep{Coleman1988,Kumar2000}. Fractionated radiotherapy allows for \emph{reoxygenation} of hypoxic tumor regions during the course of the treatment, thereby making the tumor cells more susceptible to radiation. Radio-sensitivity of a cell also depends on the phase of the cell cycle. Cells in their S phase and G$_2$-M phases are relatively more radio-resistant and radio-sensitive, respectively \citep{Terasima1961}. Hence, fractionated radiotherapy benefits from \emph{redistribution} of surviving tumor cells in radio-sensitive cell-cycle phases. Finally, tumor and normal-tissue \emph{repopulation} during the course of the treatment is an important factor considered in fractionated radiotherapy (see \cite{withers1975,Pajonk2010} for 4 R's of fractionation, which are repair, reoxygenation, redistribution, and repopulation). Therefore, the choice of fractionation \emph{regimen}, that is, the number of fractions used and the radiation dose delivered per fraction, can play an important role in the treatment outcome. The \emph{biologically-effective dose} (BED) model is a commonly-used concept in clinical practice to measure and compare the biological damage caused by different radiation fractionation regimens (see \cite{Fowler89,Fowler2010,Hall2006}).

For many advanced-stage cancers, radiotherapy alone is insufficient to successfully eradicate all tumor cells without severely damaging the surrounding normal tissues. In particular, for inoperable tumors, \emph{chemoradiotherapy} (CRT) is the standard of care, in which one or several chemotherapeutic agents are administered along with radiation to enhance tumor cell kill. CRT treatment regimens are classified into neoadjuvant, concurrent, or adjuvant depending on whether the chemotherapeutic agents are administered before, during, or after the course of radiotherapy, respectively. The rationale for combining chemotherapy with radiation in concurrent CRT may vary across different chemotherapeutic agents. Nevertheless, \cite{Steel1979} proposed a conceptual framework to summarize possible ways of combining the two modalities. In the following, we discuss an adaptation of this framework suggested by \cite{bernier2003}:
\begin{itemize}
	\item \emph{spatial cooperation} referring to a non-interactive cooperation in which radiation eradicates the primary local tumor while chemotherapy targets the systemic disease and sub-clinical metastases
	\item \emph{additivity} referring to added locoregional effects of the two modalities assuming no interaction between the cytotoxic activity of radiation and chemotherapeutic agents 
	\item \emph{radio-sensitization} referring to enhancement of radiation-induced cell kill due to chemotherapeutic agents
	\item \emph{radio-protection} referring to an antagonistic effect of chemotherapeutic agents on radiation-induced damage.
\end{itemize}
Many clinical trials have demonstrated lower incidence rate of distant metastases in patients treated with concurrent CRT compared to radiotherapy alone, which verifies the spatial cooperation mechanism \citep{Seiwert2007a}. Moreover, it has been clinically shown that concurrent CRT leads to a better local tumor control for several treatment sites compared to radiotherapy alone (see, e.g., \cite{Seiwert2007a}). However, that often comes at the expense of increased risk of normal-tissue complications (see, e.g., \cite{Parashar2011}). The enhancement of the local tumor control is explained by the additivity and radio-sensitization mechanisms. More specifically, the additive cytotoxic activity of chemotherapeutic agents is attributed to the drug-induced DNA damage. For instance, Cisplatin is a platinum-based anti-cancer drug, which is commonly used in concurrent CRT. It is suggested that Cisplatin distorts the DNA molecule structure via introducing inter- and intra-strand cross-links leading to cell death \citep{Huang1995}. Radio-sensitization is the mechanism through which the radiation-induced damage is enhanced, which also has been referred to as \emph{supra-additivity} or \emph{synergism} in the literature. Several mechanisms of action have been suggested for radio-sensitization including (i) direct enhancement of initial radiation damage by incorporating drugs into the DNA, (ii) inhibition of post-irradiation DNA-damage repair by blocking the corresponding pathways, (iii) accumulation of cells in a radio-sensitive phase (G$_2$ and M phases) or elimination of cells that are in a radio-resistant phase (S phase), (iv) elimination of hypoxic cells, and (v) inhibition of tumor repopulation (see \cite{nishimura2004,Seiwert2007a} for a detailed discussion). For instance, radio-sensitization properties of 5-Fluorouracil (5-FU), a chemotherapeutic agent typically administered intravenously during radiotherapy, are primarily ascribed to mechanisms (i)--(iii) \citep{Seiwert2007a}. Finally, the radio-protection mechanism ideally allows larger radiation doses by selectively protecting the normal tissue against radiation. An example of a chemotherapeutic agent with radio-protection properties is Amifostine, which is used to reduce the incidence or severity of normal-tissue toxicity without compromising the local tumor control \citep{Brizel2000,Movsas2005}; however, the radiation protection benefit of Amifostine remains disputable in clinical trials \citep{Buentzel2006}. This paper focuses on concurrent CRT regimens that are founded upon additivity and radio-sensitization mechanisms.

The additivity and radio-sensitization mechanisms are rarely specific to tumor cells and may also increase the biological damage to normal tissues \citep{nishimura2004}. Therefore, a therapeutic benefit is only achieved if enhancement of the primary tumor control outweighs the additional normal-tissue toxicity. Moreover, the interaction between the drug and radiation, particularly the radio-sensitization mechanism, is time and dose dependent. Hence, combining the best available radiotherapy fractionation regimen with the most promising chemotherapy drug-administration schedule may not necessarily optimize the treatment outcome for concurrent CRT \citep{bernier2003}. To maximize the therapeutic gain for concurrent CRT one has to simultaneously consider the radiation and drug administration schedules. This suggests the possible role of mathematical optimization in assisting with the design of promising CRT treatment regimens using dose-response models. However, many of the treatment regimens currently used in concurrent CRT have been solely initiated from phase I and II clinical trials rather than the underlying biological principles \citep{Wilson2006}. Although mathematical optimization has been vastly employed to determine fractionation regimens in radiotherapy (see, e.g., \cite{ Ramaknishnan2013, Wein2000, Yang2005}) and drug administration schedules in chemotherapy (see \cite{Shi2011} for a complete review), the application of mathematical optimization techniques to CRT fractionation decision has not been adequately explored. To the best of our knowledge, \cite{Jones2005} is the only study that promotes the use of mathematical modeling and optimization for CRT fractionation decision. The goal of this paper is to take an initial step toward developing a mathematical framework to study the fractionation problem in concurrent CRT. The aim, in particular, is to identify changes in optimal fractionation regimens that result from adding chemotherapeutic agents to the radiation treatment. To achieve this goal, we extend the radiation BED model to account for the additivity and radio-sensitization mechanisms. Using the extended BED model we then formulate the CRT fractionation problem as a non-linear programming model. We derive closed-form expressions for the optimal fractionation regimens for special cases in which either additive or radio-sensitization effect is relevant. Furthermore, we develop a \emph{dynamic programming} (DP) framework to study optimal fractionation regimens for chemotherapeutic agents that possess both mechanisms (see \cite{Bertsekas2000} for a complete review of DP).

The remainder of this paper is organized as follows: In Section \ref{sec: fractionation}, we introduce the BED model and discuss how it is used to make fractionation decision in radiotherapy. We then consider the extension of the BED model to CRT and formulate the CRT fractionation problem. In Section \ref{sec: solution}, we derive closed-form solutions to the CRT fractionation problem considering the additive and radio-sensitization effects individually. We also develop a DP algorithm to solve the CRT fractionation problem when both effects are simultaneously present. In Section \ref{sec: example}, we apply the results derived in Section \ref{sec: solution} to a photon and proton treatment plan and study the impact of introducing chemotherapeutic agents on the optimal fractionation regimens. In Section \ref{sec: discussion}, we present some insights into the CRT fractionation decision provided by the obtained results and discuss limitations of the study and future research directions. Finally, in Section \ref{sec: conclusion}, we summarize and conclude the paper.

\section{Fractionation decision} \label{sec: fractionation}
 In this section we review the optimal fractionation regimens for radiotherapy, discuss the fractionation decision in the context of chemoradiotherapy, and formulate the corresponding fractionation problem.

\subsection{Radiotherapy}\label{sec: rtfx}
 It is widely accepted that the logarithm of the surviving fraction of irradiated cells follows a linear-quadratic fit \citep{Hall2006}. Originally motivated by this observation, the notion of BED is defined and used in clinical practice to quantify and compare the biological effect of radiation fractionation regimens. The BED associated with a fractionation regimen with $n$ treatment fractions, where a radiation dose of $d_i$ Gray (Gy) is administered at fraction $i=1\ldots,n$, is 
\[ \mathrm{BED}=\sum_{i=1}^n d_i\left(1 + \frac{ d_i}{\ab}\right), \]
where $\ab$ (measured in Gy) is a tissue-specific parameter. Several clinical studies have associated different BED values, independent of the number of fractions $n$ and radiation doses $d_i \; (i=1,\ldots,n)$, with \emph{tumor control probability} (TCP) and \emph{normal-tissue complication probability} (NTCP) and have suggested BED tolerance values for different normal tissues to avoid radiation toxicity \citep{Marks2010}. In order to achieve the maximum TCP in many treatment sites (e.g., lung cancer), it is clinically desirable to determine fractionation regimens that deliver the maximum BED in the target region while ensuring that the BED in the surrounding normal tissue does not exceed the corresponding tolerance value. Several studies have used mathematical optimization to determine optimal fractionation regimens to achieve this. \cite{Mizuta2012} studied the dependence of the optimal fractionation regimen on the BED parameters of the target $\abt$ and normal tissue $\abn$ assuming each receives a uniform dose of $d_i$ and $\delta d_i$ ($\delta \geq 0$ is so-called \emph{sparing factor}) at fraction $i=1,\ldots,n$, respectively. In particular, they showed that if $\abn > \delta \abt$, then a stationary \emph{hypo-fractionation} regimen is optimal, that is, using as few fractions as possible with equal doses per fraction; otherwise, a stationary \emph{standard fractionation} regimen is preferable, that is, using the maximum number of fractions allowed with equal doses per fraction. Lastly, using extended BED models \cite{Ramaknishnan2013} and \cite{Yang2005} incorporated the tumor repopulation effect into the fractionation decision. 

The assumption of the uniformity of the normal-tissue dose distribution is unrealistic in clinical practice. Radiotherapy plans often deliver a uniform dose to the target region while depositing a heterogeneous dose distribution in the normal tissue, which, in turn, leads to a heterogeneous BED distribution in the normal tissue. The normal-tissue tolerance against radiation depends on the arrangement of the \emph{functional subunits} (FSUs) in the organ. FSU is the largest unit of cells capable of being regenerated from a single cell without any functionality loss \citep{Withers1988}. In \emph{parallel} organs, such as lung, FSUs are arranged in parallel and function relatively independently. As a result, parallel structures are able to tolerate a relatively large radiation dose if it is limited to small sub volumes. On the contrary, in \emph{serial} organs, such as spinal cord, FSUs are arranged in series and the integrity of each FSU is critical to the organ function. Hence, large radiation is harmful to serial organs even if it is delivered to a small sub volume. In radiotherapy treatment planning, this effect is typically accounted for by using the generalized mean of the dose distribution \citep{niemierko1999}. In particular, for a ``perfectly'' parallel and serial organ the generalized mean reduces to the average and maximum of the dose distribution, respectively. Using the generalized mean of the BED distribution, \cite{keller2013} and \cite{Unkelbach2013} have independently extended the fractionation decision studied in \cite{Mizuta2012} to allow for a heterogeneous dose distribution in the normal tissue. In particular,  they have identified ranges of the ratio $\abn/\abt$ for which a hypo- or standard fractionation regimen is optimal. In this paper, we extend the framework developed in \cite{Unkelbach2013} to study the CRT fractionation decision.


\subsection{Chemoradiotherapy}\label{sec: crtfx}
Chemotherapeutic agents administered in concurrent CRT cause additional biological damage in the target and normal tissue owing to the additivity and radio-sensitization mechanisms discussed in Section \ref{sec: introduction}. Several dose-response models have been proposed in the literature to study this additional biological damage. We first review those models and then formulate the CRT fractionation problem assuming a dose-response model motivated by those studies. 

\subsubsection{CRT dose-response model}\label{sec: crtfx-dose-response}
Steel and Peckham introduced the \emph{isobologram} analysis to study the interaction between chemotherapeutic agents and radiation \citep{steel1979b,Steel1979}. More specifically, an isobologram is generated by plotting different drug and radiation dose levels that lead to an identical cytotoxic effect, which are called \emph{isoeffect} curves. Two isoeffect curves corresponding to two different modes of interaction between the drug and radiation are generated. These curves form the-so-called \emph{envelope of additivity}. Next, the true isoeffect curve is empirically obtained and contrasted against the envelope. Depending on whether this curve lies above, within, or below the envelope, the drug and radiation interaction is classified as sub-additive (antagonistic), additive, or supra-additive (synergistic), respectively. Note that the isobologram analysis may only be used to classify the type of interaction between a given chemotherapeutic agent and radiation, and it does not provide an explicit dose-response relationship for concurrent CRT. However, to incorporate the additivity and radio-sensitization mechanisms in the CRT fractionation decision, one needs to employ a quantitative model that associates different drug and radiation dose levels with biological damage. One possible approach is to extend the radiation BED model to account for the two mechanisms. The earliest extension in the literature is the introduction of the \emph{dose modifying factor} (DMF), also known as the \emph{sensitizer enhancement ratio} (SER), to account for the additional biological damage caused by the chemotherapeutic agent. It is defined as the ratio of radiation doses with and without the chemotherapeutic agent that lead to the same biological effect \citep{wigg2001}. However, the DMF does not distinguish between additivity and radio-sensitization mechanisms and does not consider the drug's concentration. Other studies have used an extended BED model that only accommodates the additivity mechanism \citep{Jones2005,Vogelius2011a,Vogelius2011b}. In particular, they introduce the notion of \emph{chemotherapy-equivalent radiation dose} (CERD), which is an additional radiation BED measured in fractional doses of 2 Gy, to account for the cytotoxic activity of the drug. Although CERD may account for the additivity mechanism, it is intrinsically incapable of modeling the radio-sensitization effect. 

\emph{Clonogenic assay} is an in-vitro technique to determine the impact of a single or a combination of agents on the survival and proliferation of cells \citep{Hall2006}. The graphical representation of this analysis, which illustrates the relationship between the fraction of surviving cells and the drug's concentration or radiation dose, is called a \emph{survival curve}. Survival curves for many malignant and normal cell lines grown in culture and exposed to radiation or chemotherapeutic agents have been determined. In particular, survival curves of irradiated cell lines, regardless of their species of origin, follow a \emph{linear-quadratic} (LQ) fit on a logarithmic scale \citep{Hall2006}.
Several clonogenic-assay studies address the additivity and radio-sensitization mechanisms through studying the changes observed in the LQ survival curve if irradiated cell lines are also exposed to chemotherapeutic agents. In the following, we highlight some of those studies to motivate an extension to the BED model for concurrent CRT.

\paragraph{Additivity mechanism} The additivity mechanism leads to an increase in cell kill due to the cytotoxic activity of the chemotherapeutic agent. The underlying assumption for the additivity mechanism is the lack of any interactive cooperation between the drug and radiation. Thus, individual survival curves of radiation and chemotherapeutic agents can be employed to characterize the survival curve for the combined case. Several clonogenic-assay studies suggest that survival curves of cell lines that are only exposed to chemotherapeutic agents have an exponential form in terms of the drug's concentration (see, e.g., \cite{Berenbaum1969,Eichholtz1980,giocanti1993,Skipper1964}). Hence, to account for the additivity mechanism, one may add a linear term with respect to the drug's concentration to the exponent of the LQ model (see Appendix \ref{sec: appendix_BEDExt}). 

\paragraph{Radio-sensitization mechanism} 
This mechanism enhances the radiation-induced cell kill. Therefore, radio-sensitization may change the shape of the LQ survival curve. In particular, several clonogenic-assay studies suggest that this mechanism increases the initial slope of the LQ model. More specifically, they report an increase in the linear term of the LQ model without an  appreciable change in the quadratic term (see, e.g., \cite{Franken2001,Franken2012,dai2013,joiner2009,Chavaudra1989,Miller1992b}). Moreover, the extent of this increase depends on the drug's concentration. Therefore, to accommodate the radio-sensitization mechanism, one may add a multiplicative term with respect to the radiation dose and the drug's concentration to the exponent of the LQ model (see Appendix \ref{sec: appendix_BEDExt}).

\paragraph{Extending BED to concurrent CRT}
 The radiation BED model is originally motivated by the LQ survival curve. Chemotherapeutic agents change the LQ survival curve due to their additivity and radio-sensitization mechanisms. Hence, to extend the BED model to concurrent CRT we assume that the additional biological damage due to additivity and radio-sensitization is accounted for via adding a linear and multiplicative term to the BED model, respectively. More specifically, consider a treatment regimen in which a drug concentration of $c_i$ and a radiation dose of $d_i$ are administered at treatment fraction $i=1,\ldots,n$. We define a function $B: \mathbb{R}^2 \rightarrow \mathbb{R}$ as
\begin{align} \label{def: BED}
   B \left(d_i,c_i \right) = d_i\left(1 + \frac{d_i}{\ab}\right) +  \ca c_i + \ma c_i d_i,
\end{align}
where $\ca$ and $\ma$ are tissue-specific parameters associated with additive and radio-sensitization effects, respectively. We then use $B$ to measure the total biological damage caused in the tissue due to radiation and drug administration at fraction $i=1,\ldots,n$. In order to calibrate the effect of the additivity and radio-sensitization mechanisms against radiation, the parameters $\ca$ and $\ma$ are given in units of Gy per drug unit and the inverse of drug unit, respectively. Analogous to the case of BED for radiation alone, it is then assumed that a given value of extended BED, independent of the number of fractions $n$ and fractional radiation and drug doses $\left(c_i,d_i\right)\,(i=1,\ldots,n)$, is associated with a certain biological effect. 

\subsubsection{Formulation of the CRT fractionation problem}
We next consider the fractionation decision for concurrent CRT to determine a radiation fractionation regimen along with the drug administration schedule to maximize the BED received by the target while ensuring the BED delivered to the normal tissue does not exceed the corresponding tolerance. At each treatment fraction, the BED in the target and normal tissue is measured using, respectively, functions $B_{\mathrm{T}}$ and $B_{\mathrm{N}}$ with tissue-specific parameters. To formulate the CRT fractionation problem, we adopt the common approach, used in all studies discussed in Section \ref{sec: rtfx} and in particular \cite{Unkelbach2013}, that prior to the fractionation decision, the radiation treatment plan and the associated dose distribution are determined and given. Moreover, we assume that this given treatment plan delivers a uniform dose to the target and a heterogeneous dose distribution to the normal tissue. Therefore, the \emph{spatial} dose distribution in the target and normal tissue is fixed; however, the dose per fraction may be scaled up or down as desired. Let $V$ represent the set of all voxels in the normal tissue. The spatial dose distribution delivered by the given radiation treatment plan can then be characterized as follows: the target receives a uniform radiation dose of $d_i$ at fraction $i=1,\ldots,n$, and voxel $v\in V$ in the normal tissue receives a radiation dose of $\delta_{v} d_i$, where $\delta_v \geq 0$ is the sparing factor of voxel $v$. It is assumed that $\delta_v\,(v \in V)$ does not change during the course of the treatment. Let $n$ represent an upper bound on the total number of treatment fractions allowed, which is enforced due to logistical reasons or to avoid undesirable consequences of prolonging the treatment such as tumor repopulation. The fractionation problem can then be formulated in terms of decision variables of radiation dose $d_i$ and drug concentration $c_i$ at fraction $i=1,\ldots,n$ as follows:
\[\max \; \sum_{i=1}^n  B_{\mathrm{T}} \left(d_i,c_i \right)\]
subject to \hfill{(M)}
\begin{align}
	\frac{1}{|V|} \sum_{v  \in V}\sum_{i=1}^n   B_{\mathrm{N}} \left( \delta_v d_i,c_i \right) &\leq \en   \label{eqn: BED}\\
	 c_i &\leq \bar{c} & i&=1,\ldots,n  \label{eqn: drug}\\
	d_i,c_i &\geq 0 & i &= 1,\ldots,n. \label{eqn: nonneg}
\end{align}
The objective function evaluates the cumulative BED delivered to the target over all treatment fractions. The constraint in (\ref{eqn: BED}) ensures that the mean of the BED distribution in the normal tissue does not exceed the specified threshold. This constraint is derived assuming that the normal tissue has a parallel organ structure. If the normal tissue has a serial structure, then (\ref{eqn: BED}) must be changed to limit the maximum of the BED distribution delivered to normal-tissue voxels. Note that $B_{\mathrm{N}}$ is an increasing function with respect to $\delta_v\,(v\in V)$. Hence, the maximum of the BED distribution over all voxels $v\in V$ corresponds to the voxel with the largest sparing factor. Therefore, in order to account for a serial normal tissue, the set of voxels $V$ in (\ref{eqn: BED}) is replaced with a single voxel with the largest sparing factor. Furthermore, in Appendix \ref{sec: appendix_intermNT} we discuss an approximation problem that allows for the application of the developed framework to general organ structures. Constraints in (\ref{eqn: drug}) limit the drug's concentration beyond $\bar{c}$ to prohibit concentration levels that lead to intolerable distant toxicities. More specifically, most chemotherapeutic agents, in addition to locoregional toxicity in the dose-limiting normal tissue, may have side effects such as nausea, fatigue, and diarrhea. Moreover, they can cause distant toxicity, such as bone marrow suppression, nephrotoxicity (toxicity in kidneys), and ototoxicity (hearing loss), among others (see, e.g., \cite{Seiwert2007b,Seiwert2007a}). Limiting the drug's concentration reduces the risk of drug-induced distant toxicity. Lastly, the set of constraints in (\ref{eqn: nonneg}) ensure nonnegativity of the radiation dose and the drug's concentration at each fraction.

In the next section, we develop solution methods to solve (M) and discuss the structure of optimal fractionation regimens. To develop the solution method, we first consider special cases in which additivity and radio-sensitization mechanisms are considered individually, and derive closed-form expressions for the optimal treatment regimens. We then develop a DP algorithm to solve the case in which both mechanisms are relevant.

\section{Solution method}\label{sec: solution}
To study the optimal regimens to (M), we discuss the following four cases:
\begin{itemize}
	\item[(i)] radiotherapy alone
	\item[(ii)] CRT with only the additivity mechanism
	\item[(iii)]  CRT with only the radio-sensitization mechanism
	\item[(iv)]  CRT with additivity and radio-sensitization mechanisms.
\end{itemize}
We start with presenting the following lemma on the mean of the optimal BED distribution delivered to the normal tissue:
\begin{lemma}\label{lemma: binding}
For any optimal regimen $(d_i^*,c_i^*)\,\left(i=1,\ldots,n\right)$, the mean of the normal-tissue BED distribution assumes its upper-bound value, that is,
 \[\frac{1}{|V|}\sum_{v \in V}\sum_{i=1}^n B_{\mathrm{N}}\left(\delta_vd^*_i,c^*_i\right) = \en \]	
\end{lemma}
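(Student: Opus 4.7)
The plan is to argue by contradiction using a slack-absorbing perturbation. Suppose there is some optimal regimen $(d_i^*, c_i^*)$ for which the mean normal-tissue BED is strictly below $\en$; call the slack $\Delta := \en - \tfrac{1}{|V|}\sum_v \sum_i B_{\mathrm{N}}(\delta_v d_i^*, c_i^*) > 0$. The idea is to nudge one radiation dose upward: pick any fraction $i_0$ and form the perturbed regimen that replaces $d_{i_0}^*$ with $d_{i_0}^* + \epsilon$ while leaving every other $d_i^*$ and every $c_i^*$ fixed. I will then show that for small enough $\epsilon > 0$ this perturbed regimen is feasible and delivers a strictly larger tumor BED, contradicting the optimality of $(d_i^*, c_i^*)$.

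The two ingredients needed are monotonicity of $B_{\mathrm{T}}$ in the dose argument and continuity of $B_{\mathrm{N}}$ in the dose argument, both of which follow directly from the definition in \eqref{def: BED}. First, $\partial B_{\mathrm{T}}/\partial d = 1 + 2d/\abt + \mat c$, which is bounded below by $1$ whenever $d, c \ge 0$, so $B_{\mathrm{T}}(\cdot, c_{i_0}^*)$ is strictly increasing on $[0, \infty)$; hence the objective change
\[
B_{\mathrm{T}}(d_{i_0}^* + \epsilon, c_{i_0}^*) - B_{\mathrm{T}}(d_{i_0}^*, c_{i_0}^*)
\]
is strictly positive for every $\epsilon > 0$. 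Second, $B_{\mathrm{N}}(\delta_v d, c)$ is a polynomial in $d$ and thus continuous, so the normal-tissue mean BED is a continuous function of $\epsilon$ that vanishes at $\epsilon = 0$; therefore one can choose $\epsilon$ small enough that the increase in mean normal-tissue BED is below $\Delta$, keeping \eqref{eqn: BED} satisfied. The drug bounds \eqref{eqn: drug} and the nonnegativity constraints \eqref{eqn: nonneg} are untouched by the perturbation, so feasibility is preserved. The strict objective improvement then contradicts optimality, establishing the claim.

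Only minor tweaks are needed for the serial-organ variant of the problem: in that case the set $V$ in \eqref{eqn: BED} collapses to a single voxel with the largest sparing factor, but the same monotonicity-plus-continuity argument in $d_{i_0}$ applies verbatim. I do not anticipate any real obstacle; the only point worth being careful about is that monotonicity must hold at $c_{i_0}^* = 0$ as well (so that we do not rely on the radio-sensitization term), which is why the lower bound of $1$ on $\partial B_{\mathrm{T}}/\partial d$ is the right thing to record.
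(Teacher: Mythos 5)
Your proposal is correct and follows essentially the same route as the paper's own (one-sentence) proof: exploit monotonicity of $B_{\mathrm{T}}$ in the dose to increase some $d_{i_0}$ and absorb the slack in the normal-tissue constraint, contradicting optimality. Your version merely makes the continuity/small-$\epsilon$ step explicit, which the paper leaves implicit.
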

\begin{proof}
    $B_{\mathrm{T}}$ and $B_{\mathrm{N}}$ are increasing functions of $d_i$ and $c_i$ for $i=1,\ldots,n$. Therefore, if for a given solution to (M), the constraint on the mean of the BED distribution is not binding, one may increase $d_i$ for some $i$ to obtain a larger objective value until this constraint becomes binding.
\end{proof}
Lemma \ref{lemma: binding} states that any optimal regimen to (M) yields a normal-tissue BED distribution with the maximum mean allowed. Using Lemma \ref{lemma: binding} we can substitute the quadratic term $\sum_{i=1}^n d_i^2$ in $B_{\mathrm{T}}$ and rewrite the objective function of (M) as follows:
\begin{align} \label{eqn: obj}
	 \sum_{i=1}^n B_{\mathrm{T}}\left(d_i,c_i\right) &= \frac{1}{\overline{\delta^2}}\frac{\abn}{\abt}\en + \left(1-\frac{1}{\sfe}\frac{\abn}{\abt} \right)\sum_{i=1}^n d_i \notag\\
&\qquad \qquad  +\left(\frac{\cat}{\can}-\frac{1}{\overline{\delta^2}}\frac{\abn}{\abt} \right) \can \sum_{i=1}^n c_i + \left(\frac{\mat}{\man}-\frac{1}{\sfe}\frac{\abn}{\abt} \right) \man \sum_{i=1}^n c_id_i,
\end{align}
in which $\bar{\delta}$ and $\overline{\delta^2}$ are, respectively, the first and second moments of the sparing factors associated with normal-tissue voxels and $\sfe$ is 
\begin{align*}
	\sfe &\equiv \overline{\delta^2} / \bar{\delta}.
\end{align*}
Using the new expression for the objective function of (M) in (\ref{eqn: obj}) as well as Lemma \ref{lemma: binding}, we derive closed-form solutions to (M) for cases (i)--(iii). The solution methods developed in this section are also applicable to the approximation problem discussed in Appendix \ref{sec: appendix_intermNT}, which accommodates treatment sites in which the normal tissue does not necessarily have a parallel or serial organ structure.

\subsection{Radiotherapy alone}  \label{sec: radiation}
In the absence of chemotherapeutic agents, we have $c_i = 0\;\left(i=1,\ldots,n\right)$. Hence, (\ref{eqn: obj}) reduces to
\begin{align} \label{eqn: obj_rad}
	 \sum_{i=1}^n B_{\mathrm{T}}\left(d_i,0\right) = \frac{1}{\overline{\delta^2}}\frac{\abn}{\abt}\en + \left(1-\frac{1}{\sfe}\frac{\abn}{\abt} \right) \sum_{i=1}^n d_i.
\end{align}
\cite{Unkelbach2013} used (\ref{eqn: obj_rad}) to show that the optimal solution is a stationary fractionation regimen that uses either the minimum or maximum number of fractions allowed, that is, hypo-fractionation or standard fractionation, respectively (see also \cite{Mizuta2012} for the mathematical result on the existence of optimal stationary regimens). We briefly derive this result for later use. Using Lemma \ref{lemma: binding} and considering only stationary solutions, we determine the optimal total radiation dose to be
\begin{align} \label{eqn: D_rad}
	\sum_{i=1}^n d_i = \frac{-1 + \sqrt{1+4\sfe \en / n\bar{\delta}\abn}}{2\sfe/n\abn}.
\end{align}
The expression in (\ref{eqn: D_rad}) is increasing in the number of fractions $n$. Therefore, depending on the coefficient sign for the sum of radiation dose in (\ref{eqn: obj_rad}), which we denote by
\begin{align*}
	\Delta_{\mathrm{r}} = 1-\frac{1}{\sfe}\frac{\abn}{\abt},
\end{align*}
we determine the optimal number of fractions. More specifically, if  $\Delta_{\mathrm{r}} \geq 0$, then the optimal regimen uses the maximum number of fractions allowed (standard fractionation regimen); otherwise, the use of the minimum number of fractions is optimal (hypo-fractionation regimen). Therefore, \cite{Unkelbach2013} introduced the following condition to determine whether a standard fractionation regimen should be used:
\begin{align}
	\abn \leq \sfe \abt. \label{condition: RT}
\end{align}
If (\ref{condition: RT}) is false, then a hypo-fractionation regimen is optimal. 

\subsection{CRT with additivity mechanism} \label{sec: additive}
Several cytotoxic agents, such as Temozolomide, are suggested to only have an additive effect \citep{Chalmers2009}. This corresponds to the chemotherapeutic agent exhibiting insignificant or no radio-sensitization. In other words, $\ma= 0$ in (\ref{def: BED}) for both the target and normal tissue. To derive the optimal solution to (M), we consider two possible scenarios concerning whether the chemotherapeutic agent can solely deposit a BED of $\en$ or larger in the normal tissue, which is summarized by
\begin{align}
B_{\mathrm{N}}\big(0,\bar{c}\big) \geq \frac{\en}{n}. \label{condition: add}
\end{align}
We determine the optimal solution to (M) under the assumption that the condition in (\ref{condition: add}) is satisfied and then generalize the solution to the second scenario in which this condition is not satisfied. Substituting $\mat,\man= 0$ in (\ref{eqn: obj}), we obtain the following expression for the objective function of (M):
\begin{align}\label{eqn: obj_additive}
	 \sum_{i=1}^n B_{\mathrm{T}}\left(d_i,c_i\right) = \frac{1}{\overline{\delta^2}}\frac{\abn}{\abt}\en + \left(1-\frac{1}{\sfe}\frac{\abn}{\abt} \right)\sum_{i=1}^n d_i +\left(\frac{\cat}{\can}-\frac{1}{\overline{\delta^2}}\frac{\abn}{\abt} \right) \can \sum_{i=1}^n c_i.
\end{align}
For any drug concentrations $c_i \;(i=1,\ldots,n)$, it is easy to see that the fractionation problem reduces to the radiation-only case of Section \ref{sec: radiation}, in which the upper bound on the average BED in the normal tissue reduces to $\en - \can \sum_{i=1}^n c_i $. It immediately follows that for CRT with only the additivity mechanism, there is an optimal stationary radiation fractionation regimen. To characterize this regimen, let $b = \can \sum_{i=1}^n c_i$ represent the total BED delivered to the normal tissue by the chemotherapeutic agent. Considering only stationary radiation fractionation regimens with a fixed number of fractions $n$ and using Lemma \ref{lemma: binding}, we determine the optimal total radiation dose, denoted by $D_{\mathrm{a}}=\sum_{i=1}^n d_i$, as a function of $b$ as follows:
\begin{align}\label{eqn: D_additive}
	D_{\mathrm{a}}\left(b;n \right) = \frac{-1 + \sqrt{1+4\sfe\left(\en - b\right) / n\bar{\delta}\abn}}{2\sfe/n\abn}.
\end{align}
The expressions for the objective function of (M) in (\ref{eqn: obj_additive}) and the optimal total radiation dose in (\ref{eqn: D_additive}), lead to the following theorem that describes the optimal solution to (M) for CRT with only the additivity mechanism:
\begin{theorem}\label{theorem: additive}
	Let
\begin{align*}
	\Delta_{\mathrm{r}} &= 1-\frac{1}{\sfe}\frac{\abn}{\abt}, \\
	 \Delta_{\mathrm{a}} &= \frac{\cat}{\can}-\frac{1}{\overline{\delta^2}}\frac{\abn}{\abt}, \\
	 \varrho_{\mathrm{a}} &= \frac{\en}{ D_{\mathrm{a}}\left(0;1\right)}, \\
	 \underline{\rho}_{\mathrm{a}} &=  \bar{\delta}\left[\frac{2\sfe}{n \abn} D_{\mathrm{a}}\left(\en;n\right)+1\right],\\
	 \overline{\rho}_{\mathrm{a}} &= \bar{\delta}\left[\frac{2\sfe}{n \abn} D_{\mathrm{a}}\left(0;n\right)+1\right], \mbox{ and }\\
	 b_{\mathrm{a}}&= \en + \frac{n\abn}{4 \overline{\delta^2}} \left(\bar{\delta}^2 - \left(\Delta_{\mathrm{r}}/\Delta_{\mathrm{a}}\right)^2 \right).
\end{align*}
In the presence of only the additivity mechanism, that is, $\mat = \man = 0$, the optimal treatment regimen to (M) is:
\begin{align*}
\mbox{optimal regimen}:
\begin{cases}	
	\mbox{RT-hypo} & \Delta_{\mathrm{r}} \leq 0,\Delta_{\mathrm{a}} \leq 0, \Delta_{\mathrm{r}} / \Delta_{\mathrm{a}} \leq \varrho_{\mathrm{a}}\\
	\mbox{RT-std} & \Delta_{\mathrm{r}} \geq 0,\Delta_{\mathrm{a}} \leq 0 \\
	\mbox{RT-std} &  \Delta_{\mathrm{r}} \geq 0,\Delta_{\mathrm{a}} \geq 0, \Delta_{\mathrm{r}} / \Delta_{\mathrm{a}} \geq \overline{\rho}_{\mathrm{a}}\\
	\mbox{CRT-std} &  \Delta_{\mathrm{r}} \geq 0,\Delta_{\mathrm{a}} \geq 0, \underline{\rho}_{\mathrm{a}} < \Delta_{\mathrm{r}} / \Delta_{\mathrm{a}} < \overline{\rho}_{\mathrm{a}}\\
	\mbox{CT} &  \Delta_{\mathrm{r}} \geq 0,\Delta_{\mathrm{a}} \geq 0, \Delta_{\mathrm{r}} / \Delta_{\mathrm{a}} \leq \underline{\rho}_{\mathrm{a}}\\
	\mbox{CT} & \Delta_{\mathrm{r}} \leq 0,\Delta_{\mathrm{a}} \geq 0 \\
	\mbox{CT} &  \Delta_{\mathrm{r}} \leq 0,\Delta_{\mathrm{a}} \leq 0, \Delta_{\mathrm{r}} / \Delta_{\mathrm{a}} > \varrho_{\mathrm{a}},
\end{cases}
\end{align*}
where CT, RT-hypo, RT-std, and CRT-std are abbreviated forms of chemotherapy, radiotherapy using hypo-fractionation, radiotherapy using standard fractionation, and chemoradiotherapy using standard fractionation, respectively. Furthermore, CRT-std delivers an average BED of $b_{\mathrm{a}}$ and $\en - b_{\mathrm{a}}$ to the normal tissue via drug and radiation administration, respectively.
\end{theorem}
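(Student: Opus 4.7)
The plan is to reduce (M) to a one-parameter search over $b := \can \sum_{i=1}^n c_i$, the drug-induced normal-tissue BED. With $\mat = \man = 0$, Lemma \ref{lemma: binding} forces the mean-BED constraint to bind, and (\ref{eqn: obj_additive}) writes the target BED as $\mathrm{const} + \Delta_{\mathrm{r}} \sum_i d_i + \Delta_{\mathrm{a}} b$. Because the drug enters both BED functions only through $\sum_i c_i$, any split satisfying $\sum_i c_i = b/\can$ and $c_i \leq \bar{c}$ yields the same objective; condition (\ref{condition: add}) ensures such a split exists for every $b \in [0,\en]$. For fixed $b$ and $n$, the residual radiation subproblem is the radiotherapy-alone problem of Section \ref{sec: radiation} with reduced budget $\en - b$, whose optimum is a stationary schedule with total dose $D_{\mathrm{a}}(b;n)$ given by (\ref{eqn: D_additive}). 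Hence (M) collapses to
\[
\max_{n,\, b \in [0,\en]}\; F(b,n) \;\equiv\; \Delta_{\mathrm{r}}\, D_{\mathrm{a}}(b;n) + \Delta_{\mathrm{a}}\, b.
\]

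Implicit differentiation of $D_{\mathrm{a}} + (\sfe/(n\abn))\,D_{\mathrm{a}}^{2} = (\en - b)/\bar{\delta}$ gives $\partial D_{\mathrm{a}}/\partial b = -1/\rho(b;n)$ with $\rho(b;n) := \bar{\delta}\bigl[1 + (2\sfe/(n\abn))\,D_{\mathrm{a}}(b;n)\bigr]$ strictly decreasing in $b$; at $b=0$ (and maximum $n$) one recovers $\rho = \overline{\rho}_{\mathrm{a}}$, while $\rho(\en;n) = \underline{\rho}_{\mathrm{a}} = \bar{\delta}$. Thus $D_{\mathrm{a}}$ is decreasing and concave in $b$, so $F(\cdot,n)$ is concave when $\Delta_{\mathrm{r}} \geq 0$ and convex when $\Delta_{\mathrm{r}} \leq 0$. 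A sign split on $(\Delta_{\mathrm{r}},\Delta_{\mathrm{a}})$ now pins down $b^*$: if $\Delta_{\mathrm{r}} \geq 0,\Delta_{\mathrm{a}} \leq 0$ then $F_b \leq 0$ and $b^{*}=0$; if $\Delta_{\mathrm{r}} \leq 0,\Delta_{\mathrm{a}} \geq 0$ then $F_b \geq 0$ and $b^{*} = \en$; if both are nonnegative, concavity makes $\rho(b^*;n) = \Delta_{\mathrm{r}}/\Delta_{\mathrm{a}}$ necessary and sufficient for an interior maximum, which exists iff $\underline{\rho}_{\mathrm{a}} < \Delta_{\mathrm{r}}/\Delta_{\mathrm{a}} < \overline{\rho}_{\mathrm{a}}$, and substituting the implied $D_{\mathrm{a}}(b^*;n) = (n\abn/(2\sfe))\bigl(\Delta_{\mathrm{r}}/(\bar{\delta}\Delta_{\mathrm{a}}) - 1\bigr)$ back into the defining quadratic yields, after routine algebra, exactly $b^{*} = b_{\mathrm{a}}$. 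If both coefficients are nonpositive, $F$ is convex in $b$, so the maximum sits at an endpoint.

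The last step is the outer optimization over $n$. When $b^*=0$ and $\Delta_{\mathrm{r}} \geq 0$, $D_{\mathrm{a}}(0;n)$ is increasing in $n$, recovering RT-std from Section \ref{sec: radiation}. When $b^* = \en$, $F$ is independent of $n$, giving pure CT (any feasible $n$). For the interior CRT case, back-substituting $b_{\mathrm{a}}$ and the corresponding $D_{\mathrm{a}}(b_{\mathrm{a}};n)$ into $F$ collapses to
\[
F(b_{\mathrm{a}}, n) \;=\; \Delta_{\mathrm{a}}\,\en \;+\; \frac{n\,\abn}{4\,\overline{\delta^2}\,\Delta_{\mathrm{a}}}\bigl(\Delta_{\mathrm{r}} - \bar{\delta}\,\Delta_{\mathrm{a}}\bigr)^{2},
\]
strictly increasing in $n$ since $\Delta_{\mathrm{a}} > 0$, so CRT-std uses the maximum allowed $n$ — which also shows the thresholds $\overline{\rho}_{\mathrm{a}}$ and $b_{\mathrm{a}}$ in the theorem are correctly evaluated at that $n$. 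In the convex subcase $\Delta_{\mathrm{r}},\Delta_{\mathrm{a}} \leq 0$, $F(0,n) = \Delta_{\mathrm{r}}\, D_{\mathrm{a}}(0;n)$ is maximized at $n=1$ (RT-hypo) while $F(\en,\cdot)$ is constant; the endpoint comparison reduces to $\Delta_{\mathrm{r}}/\Delta_{\mathrm{a}} \lessgtr \varrho_{\mathrm{a}}$. Reassembling the resulting $(b^*, n^*)$ across all seven sign/ratio subregions reproduces the tabulated optima. The chief technical hurdle is the interior case: converting the first-order condition into the explicit closed form $b_{\mathrm{a}}$ and then verifying that the optimal value is monotone in $n$, so that standard — rather than some intermediate — fractionation is truly optimal.
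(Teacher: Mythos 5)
Your proof is correct and follows essentially the same route as the paper's: reduce to the univariate problem $\max_b \Delta_{\mathrm{r}} D_{\mathrm{a}}(b;n) + \Delta_{\mathrm{a}} b$, use the convexity/concavity of $D_{\mathrm{a}}(\cdot;n)$ to split on the signs of $\Delta_{\mathrm{r}}$ and $\Delta_{\mathrm{a}}$, and locate the optimum at an endpoint or at the interior stationary point $b_{\mathrm{a}}$, with the thresholds $\underline{\rho}_{\mathrm{a}}, \overline{\rho}_{\mathrm{a}}, \varrho_{\mathrm{a}}$ arising exactly as in the paper. Your implicit-differentiation identity $\partial D_{\mathrm{a}}/\partial b = -1/\rho(b;n)$ and the explicit check that $F(b_{\mathrm{a}},n)$ is increasing in $n$ are welcome refinements — the latter closes a step the paper asserts without computation — but they do not change the argument's structure.
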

The proof is in Appendix \ref{sec: appendix_additive}. Figure \ref{fig: additive} illustrates optimal regimens for different values of $ \Delta_{\mathrm{r}}$ and $\Delta_{\mathrm{a}}$ as specified in Theorem \ref{theorem: additive}.
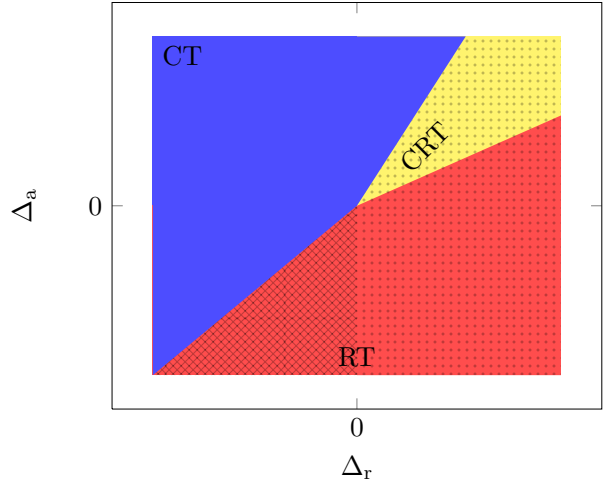
\begin{figure}[h]
\center
\begin{tikzpicture}[scale=0.95]
	\begin{axis}[xtickmin=0,xtickmax=0,ytickmin=0,ytickmax=0, xlabel=$\Delta_{\mathrm{r}}$,ylabel=$\Delta_{\mathrm{a}}$,domain=-15:15,area legend, legend style={anchor=north west}]
		\addplot [fill=yellow!70,draw=none] coordinates {(0,0) (0,15) (15,15) (15,0)} \closedcycle;
		\addplot [fill=blue!70,draw=none] coordinates {(0,0) (0,15) (-15,15) (-15,0)} \closedcycle;
		\addplot [fill=red!70,draw=none] coordinates {(0,0) (0,-15) (-15,-15) (-15,0)} \closedcycle;
		\addplot [fill=red!70,draw=none] coordinates {(0,0) (0,-15) (15,-15) (15,8) (0,0)} \closedcycle;
		\addplot [fill=blue!70,draw=none] coordinates {(0,0) (8,15) (0,15) (-15,15) (-15,-15) (0,0)} \closedcycle;
		\addplot [fill=gray!80,opacity = 0.5,pattern=dots,draw=none] coordinates {(0,-15) (0,0) (8,15) (15,15) (15,-15)  (0,-15)} \closedcycle;
		\addplot [fill=gray!80,opacity = 0.5,pattern=crosshatch,draw=none] coordinates {(0,-15) (0,0) (-15,-15) (0,-15)} \closedcycle;
		\node at (axis cs: -15,15) [anchor=north west] {CT};
		\node at (axis cs:  0,-15) [anchor=south]{RT};
		\node at (axis cs:   2,4) [anchor=north west,rotate=45]{CRT};
    \end{axis}
\end{tikzpicture}
\caption{\label{fig: additive} \small{Schematic of optimal regimens for CRT with only the additivity mechanism. Dotted and hatched regions illustrate whether standard or hypo-fractionated radiotherapy is optimal, respectively.}}
\end{figure}
The following observations are made for optimal treatment regimens associated with CRT with only the additivity mechanism:
\begin{itemize}
	\item[1.] If $\Delta_{\mathrm{r}} \geq 0$, we have $\abn \leq \sfe \abt$, and thus it is optimal to administer radiation using standard fractionation. This fractionation decision is independent of the use of the chemotherapeutic agent.
	\item[2.] If $\Delta_{\mathrm{r}} \geq 0$ and $\Delta_{\mathrm{a}} < 0$, it is optimal to use radiation alone without administering the drug. In contrast, if $\Delta_{\mathrm{r}} \leq 0$ and $\Delta_{\mathrm{a}} > 0$ it is optimal to use chemotherapy alone with no radiotherapy.
	\item[3.] For the case where $\Delta_{\mathrm{r}}$ and $\Delta_{\mathrm{a}}$ have the same sign, the ratio $\Delta_{\mathrm{r}} / \Delta_{\mathrm{a}}$ determines whether the chemotherapeutic agent should be used. Transitioning thresholds are specified by $\underline{\rho}_{\mathrm{a}}$, $\overline{\rho}_{\mathrm{a}}$, and  $\varrho_{\mathrm{a}}$.
	\item[4.] If a standard radiation fractionation is used, then there is a range of $\Delta_{\mathrm{r}} / \Delta_{\mathrm{a}}$, that is, $ \left(\underline{\rho}_{\mathrm{a}},\overline{\rho}_{\mathrm{a}} \right)$, within which radiation and chemotherapy are combined. In contrast, for a hypo-fractionated radiation regimen it is never optimal to administer an additive chemotherapeutic agent concurrently with radiation. This aspect is explained further in Section \ref{sec: discussion}.
\end{itemize}

We next discuss optimal regimens when (\ref{condition: add}) is not satisfied. In other words, the drug alone is not able to deliver a BED of $\en$ or larger to the normal tissue. It is easy to see that those cases in Theorem \ref{theorem: additive} that only use radiation, which are RT-std and RT-hypo, continue to be optimal under the new condition. However, the remaining cases, which are CT and CRT-std, need to be adjusted accordingly. In particular, for the case of drug administration only CT, the chemotherapeutic agent can deliver a maximum BED of $n B_{\mathrm{N}}\big(0,\bar{c}\big)$. Hence, in order to have the normal-tissue BED constraint be binding (see Lemma \ref{lemma: binding}), an additional average BED of $\en- n B_{\mathrm{N}}\big(0,\bar{c}\big)$ is required to be delivered using hypo- or standard fractionation radiation depending on the sign of $\Delta_{\mathrm{r}}$. Finally, for the case of CRT-std, a BED of $\min\left\{n B_{\mathrm{N}}\big(0,\bar{c}\big), b_{\mathrm{a}}\right\}$ and $\en-\min\left\{n B_{\mathrm{N}}\big(0,\bar{c}\big),b_{\mathrm{a}}\right\}$ are delivered using drug and radiation administration, respectively.

\subsection{CRT with radio-sensitization mechanism} \label{sec: synergistic}
Chemotherapeutic agents such as hypoxic-cell radio-sensitizers are primarily used to enhance radiation-induced damage and do not exhibit independent cytotoxic activity. For such agents, we assume $\ca=0$ for the target and normal tissue  in (\ref{def: BED}). Substituting $\cat,\can = 0$ in (\ref{eqn: obj}), the objective function of (M) is
\begin{align} \label{eqn: obj_synergistic}
	 \sum_{i=1}^n B_{\mathrm{T}}\left(d_i,c_i\right) = \frac{1}{\overline{\delta^2}}\frac{\abn}{\abt}\en + \left(1-\frac{1}{\sfe}\frac{\abn}{\abt} \right)\sum_{i=1}^n d_i + \left(\frac{\mat}{\man}-\frac{1}{\sfe}\frac{\abn}{\abt} \right) \man \sum_{i=1}^n c_id_i.
\end{align}
To derive the optimal solution to (M) for the radio-sensitization mechanism, we first show the existence of a stationary fractionation regimen.
\begin{proposition} \label{prop: equalFrac}
   If  $\cat,\can = 0$, then there exists an optimal stationary fractionation regimen to (M).
\end{proposition}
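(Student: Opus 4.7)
The plan is to exploit the symmetry of (M) across fractions when $\cat = \can = 0$: both the objective $\sum_i B_{\mathrm{T}}(d_i, c_i)$ and the constraint LHS $(1/|V|)\sum_{v,i} B_{\mathrm{N}}(\delta_v d_i, c_i)$ are sums of per-fraction contributions whose dependence on $i$ is only through $(d_i, c_i)$, so the Lagrangian separates additively with identical summands and the KKT stationarity equations should force a uniform $(d_i,c_i)$ across active fractions. First I would note that for fixed $n$ the feasible set is compact and the objective is continuous, so an optimum $(d_i^*, c_i^*)$ exists; Lemma \ref{lemma: binding} makes the normal-tissue constraint binding, and Slater's condition (satisfied strictly by $d_i=c_i=0$) guarantees a KKT multiplier $\lambda^* \geq 0$. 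The Lagrangian then takes the form $L = \lambda^*\en + \sum_i L_i(d_i, c_i; \lambda^*)$ with
\[
L_i(d,c;\lambda) = (1-\lambda\bar{\delta})\,d + \left(\frac{1}{\abt} - \frac{\lambda\,\overline{\delta^2}}{\abn}\right) d^2 + (\mat - \lambda\bar{\delta}\man)\,cd,
\]
a function that is independent of $i$.

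Next I would analyze the KKT conditions. Stationarity in $c_i$ gives $\partial L_i/\partial c_i = d_i(\mat - \lambda^*\bar{\delta}\man)$, so (combined with the boundary multiplier conditions) the sign of $\mat - \lambda^*\bar{\delta}\man$ partitions every active fraction ($d_i^*>0$) into one of three regimes: (a) $c_i^*=0$, (b) $c_i^*=\bar{c}$, or (c) $c_i^*$ unconstrained if $\mat = \lambda^*\bar{\delta}\man$. The crucial point is that $\lambda^*$ is a single shadow price shared by every fraction, so the regime is uniform in $i$. Stationarity in $d_i$ then yields, for $d_i^*>0$,
\[
d_i^* = \frac{(\lambda^*\bar{\delta} - 1) + c_i^*(\lambda^*\bar{\delta}\man - \mat)}{2\bigl(1/\abt - \lambda^*\,\overline{\delta^2}/\abn\bigr)}.
\]
In regimes (a) and (b) the uniform $c_i^*$ makes the right-hand side $i$-independent; in regime (c) the coefficient of $c_i^*$ in the numerator vanishes identically, so $d_i^*$ is again $i$-independent. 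Hence the active per-fraction doses are equal, delivering a stationary radiation schedule on the $n^*\leq n$ active fractions.

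Finally, for regimes (a) and (b) the drug schedule $c_i^*$ is already constant across active fractions. For regime (c), where $\mat = \lambda^*\bar{\delta}\man$, the $c_i$'s enter both the objective and the binding constraint only through $\sum_i c_i d_i$, so replacing each active $c_i^*$ by their common mean $\bar{c}_{\mathrm{avg}} \in [0,\bar{c}]$ preserves that sum and therefore preserves both the objective value and the binding normal-tissue constraint, yielding a stationary drug schedule as well. The main technical nuisance I anticipate is the degenerate configuration $1/\abt = \lambda^*\,\overline{\delta^2}/\abn$, where the denominator in the $d_i^*$-formula vanishes; in that edge case the $d$-stationarity equation collapses into a linear condition on $c_i$ alone and the same redistribution arguments — or, alternatively, a short continuity/perturbation argument letting $\lambda$ approach the degenerate value — still produce a stationary optimum.
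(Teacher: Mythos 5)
Your proof is correct in substance and follows the same underlying route as the paper's---both arguments rest on the KKT conditions of (M) and the fact that a single multiplier for the normal-tissue BED constraint is shared across all fractions---but you organize it differently and, in one place, more robustly. The paper proves stationarity by a five-way case analysis over pairs of active fractions $(c^*_i, c^*_{i'})$, ruling out mixed configurations by contradiction; your observation that the sign of $\mat - \lambda^*\bar{\delta}\man$ is a single $i$-independent number that forces every active fraction into the same regime collapses those cases at once. More importantly, your averaging step in regime (c) is genuinely needed: when $\mat = \lambda^*\bar{\delta}\man$ the $c$-stationarity condition places no restriction on the individual $c^*_i$, so KKT alone cannot show the drug levels are equal (the paper's case (iii) asserts this implication, but once $d^*_i = d^*_{i'}$ is fixed at the value forced by $\nu = \mat/\man$, the equal-ratio equation it invokes is satisfied identically in $c^*_i, c^*_{i'}$). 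Since the proposition only asserts \emph{existence} of a stationary optimum, your redistribution of the $c^*_i$'s---which preserves $\sum_i c_i d_i$ and hence both the objective and the binding constraint---is exactly the right repair.

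Two caveats. First, Slater's condition is the wrong tool here: with $\mat,\man>0$, (M) is not a convex program (the maximization objective contains an indefinite quadratic and the constraint contains the bilinear terms $c_i d_i$), so strict feasibility does not by itself guarantee the existence of KKT multipliers at an optimum. The paper instead verifies the linear-independence constraint qualification, using Lemma \ref{lemma: binding} to ensure some $d^*_i>0$ at optimality so that the gradient of the binding BED constraint is independent of the active bound constraints; you should argue via LICQ or MFCQ in the same way. Second, your degenerate configuration $1/\abt = \lambda^*\overline{\delta^2}/\abn$ is a real gap---there the $d$-stationarity equation no longer determines $d^*_i$, and a perturbation hand-wave does not by itself close it---but note that the paper's own proof carries the analogous unresolved assumption $\abn \neq \sfe\abt$, so your treatment is no less rigorous on that point.
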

The proof is in Appendix \ref{sec: appendix_stationary}. Proposition \ref{prop: equalFrac} states that there exists an optimal regimen $\big(d^*_i,c^*_i\big)\,\left(i=1,\ldots,n\right)$ to (M) in which for all pairs of fractions $i,i^{\prime}$ such that $d^*_i, d^*_{i^{\prime}} > 0$, we have $d^*_i = d^*_{i^{\prime}}$ and $c^*_i = c^*_{i^{\prime}}$. In order to characterize this stationary regimen, we set $c_1=c_2=\ldots=c_n=c$ in (M). Using Lemma \ref{lemma: binding} we see that the optimal total radiation dose, which we denote by $D_{\mathrm{s}}=\sum_{i=1}^n d_i$, as a function of the drug concentration $c$ for a fixed number of treatment fractions $n$ is
\begin{align} \label{eqn: D_synergistic}
D_{\mathrm{s}}\left(c;n\right) = \frac{-\left(1+\man c\right) + \sqrt{\left(1+\man c\right)^2+4\sfe \en / n\bar{\delta}\abn}}{2\sfe/n \abn}. 
\end{align}
This leads to the following result on the optimal fractionation regimen to (M) for CRT with only the radio-sensitization mechanism:

\begin{theorem}\label{theorem: synergistic}
Let
\begin{align*}
	  \Delta_{\mathrm{r}} &= 1-\frac{1}{\sfe}\frac{\abn}{\abt},\\
	\Delta_{\mathrm{s}} &= \frac{\mat}{\man}-\frac{1}{\sfe}\frac{\abn}{\abt},\\
	\varrho_{\mathrm{s}} &= \frac{\man\bar{c}}{ D_{\mathrm{s}}\left(0;1\right) / D_{\mathrm{s}}\left(\bar{c};1\right) - 1},\\
	\underline{\rho}_{\mathrm{s}} &= \frac{2\sfe}{n \abn} D_{\mathrm{s}}\left(\bar{c};n\right)+1, \\
	\overline{\rho}_{\mathrm{s}} &= \frac{2\sfe}{n \abn} D_{\mathrm{s}}\left(0;n\right)+1, \mbox{ and }\\
	c_{\mathrm{s}} & = \frac{\overline{\rho}_{\mathrm{s}}^2-\left(\Delta_{\mathrm{r}}/\Delta_{\mathrm{s}}\right)^2}{ 2 \man \left(\Delta_{\mathrm{r}} / \Delta_{\mathrm{s}} - 1\right)}.
\end{align*}
In the presence of only the radio-sensitization effect, that is, $\cat=\can =0$, the optimal regimen to (M) is
\begin{align*}
\mbox{optimal regimen}:
\begin{cases}
\begin{array}{lll}
	 \mbox{RT-hypo} & c^*=0 &  \Delta_{\mathrm{r}}\leq 0,\Delta_{\mathrm{s}} \leq 0,  \Delta_{\mathrm{r}}/\Delta_{\mathrm{s}} \leq \varrho_{\mathrm{s}} \\
	 \mbox{RT-hypo} & c^*=\bar{c} &  \Delta_{\mathrm{r}}\leq 0,\Delta_{\mathrm{s}} \leq 0,  \Delta_{\mathrm{r}}/\Delta_{\mathrm{s}} > \varrho_{\mathrm{s}}\\
	 \mbox{RT-hypo} & c^*=\bar{c}& \Delta_{\mathrm{r}} \leq 0, \Delta_{\mathrm{s}} \geq 0,  \Delta_{\mathrm{r}}/\Delta_{\mathrm{s}} \leq -\man \bar{c} \\
	 \mbox{RT-std} & c^*=\bar{c}& \Delta_{\mathrm{r}} \leq 0, \Delta_{\mathrm{s}} \geq 0,  \Delta_{\mathrm{r}}/\Delta_{\mathrm{s}} > -\man \bar{c}\\
	\mbox{RT-std}  & c^*=\bar{c} & \Delta_{\mathrm{r}} \geq 0,\Delta_{\mathrm{s}} \geq 0, \Delta_{\mathrm{r}} / \Delta_{\mathrm{s}} < \underline{\rho}_{\mathrm{s}} \\
	\mbox{RT-std} & 0 < c^* = c_{\mathrm{s}} < \bar{c} &  \Delta_{\mathrm{r}} \geq 0,\Delta_{\mathrm{s}} \geq 0, \underline{\rho}_{\mathrm{s}} \leq \Delta_{\mathrm{r}}/\Delta_{\mathrm{s}} < \overline{\rho}_{\mathrm{s}} \\
	\mbox{RT-std} & c^*=0  &   \Delta_{\mathrm{r}} \geq 0,\Delta_{\mathrm{s}} \geq 0,\Delta_{\mathrm{r}} / \Delta_{\mathrm{s}} \geq \overline{\rho}_{\mathrm{s}} \\
	 \mbox{RT-std } & c^*=0 &  \Delta_{\mathrm{r}} \geq 0, \Delta_{\mathrm{s}} \leq 0.
\end{array}
\end{cases}
\end{align*}
\end{theorem}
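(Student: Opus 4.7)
The plan is to reduce (M) in this setting to a two-variable maximization in $(c,n)$ and then carry out a sign analysis on the resulting objective. By Proposition~\ref{prop: equalFrac} I may restrict attention to stationary regimens $c_i\equiv c$, $d_i\equiv d$; Lemma~\ref{lemma: binding} then implies the normal-tissue BED constraint is active, which pins the total radiation dose as $nd = D_{\mathrm{s}}(c;n)$ given by (\ref{eqn: D_synergistic}). Substituting these into (\ref{eqn: obj_synergistic}) and discarding the additive constant leaves the reduced objective
\begin{equation*}
F(c,n) = D_{\mathrm{s}}(c;n)\left[\Delta_{\mathrm{r}} + \Delta_{\mathrm{s}}\,\man c\right],
\end{equation*}
to be maximized over $c\in[0,\bar c]$ and integer $n$ in the feasible range, with RT-hypo corresponding to $n=1$ and RT-std to the maximum allowed $n$.

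A direct inspection of (\ref{eqn: D_synergistic}) shows $D_{\mathrm{s}}(c;n)$ is strictly decreasing in $c$ and strictly increasing in $n$. For fixed $n$, differentiating in $c$ and using $\partial D_{\mathrm{s}}/\partial c = -\man D_{\mathrm{s}}/\sqrt{(1+\man c)^2+K_n}$ with $K_n := 4\sfe\en/(n\bar{\delta}\abn)$ gives
\begin{equation*}
\frac{\partial F}{\partial c} = \frac{\man\,D_{\mathrm{s}}}{\sqrt{(1+\man c)^2+K_n}}\left[\Delta_{\mathrm{s}}\sqrt{(1+\man c)^2+K_n} - \Delta_{\mathrm{r}} - \Delta_{\mathrm{s}}\man c\right].
\end{equation*}
Setting the bracketed factor to zero and squaring produces a stationary point which, after applying the identity $\overline{\rho}_{\mathrm{s}}^2=1+K_n$ (immediate from the definition of $\overline{\rho}_{\mathrm{s}}$ and (\ref{eqn: D_synergistic}) at $c=0$), coincides with the $c_{\mathrm{s}}$ stated in the theorem. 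The monotonicity of $D_{\mathrm{s}}$ in $n$ then implies that, for whichever $c^*$ is optimal, the optimal $n$ is governed by the sign of the coefficient $\Delta_{\mathrm{r}}+\Delta_{\mathrm{s}}\man c^*$: maximal $n$ (RT-std) when nonnegative, $n=1$ (RT-hypo) when negative.

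The remaining work is an exhaustive case split on the signs of $\Delta_{\mathrm{r}}$ and $\Delta_{\mathrm{s}}$. Write $R=\Delta_{\mathrm{r}}/\Delta_{\mathrm{s}}$. When both are nonnegative, the bracket in $\partial F/\partial c$ is a strictly decreasing function of $c$ that equals $\Delta_{\mathrm{s}}(\overline{\rho}_{\mathrm{s}}-R)$ at $c=0$ and $\Delta_{\mathrm{s}}(\underline{\rho}_{\mathrm{s}}-R)$ at $c=\bar c$; this places the interior critical point inside $[0,\bar c]$ precisely when $\underline{\rho}_{\mathrm{s}}\le R\le\overline{\rho}_{\mathrm{s}}$, producing the three RT-std subcases with $c^*\in\{0,c_{\mathrm{s}},\bar c\}$. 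If $\Delta_{\mathrm{r}}\le 0$ and $\Delta_{\mathrm{s}}\ge 0$, the bracket is nonnegative throughout, so $F$ is monotone increasing in $c$, forcing $c^*=\bar c$; the coefficient $\Delta_{\mathrm{r}}+\Delta_{\mathrm{s}}\man\bar c$ then selects RT-std versus RT-hypo through the threshold $R=-\man\bar c$. The symmetric case $\Delta_{\mathrm{r}}\ge 0$, $\Delta_{\mathrm{s}}\le 0$ gives a nonpositive bracket, so $F$ is monotone decreasing, $c^*=0$, and RT-std follows because the coefficient collapses to $\Delta_{\mathrm{r}}\ge 0$.

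The main obstacle will be the final case, $\Delta_{\mathrm{r}},\Delta_{\mathrm{s}}\le 0$. Here the coefficient stays nonpositive on $[0,\bar c]$, so for every fixed $c$ the optimal $n$ is $1$. However, the bracket in $\partial F/\partial c$ is now monotone increasing in $c$ (because $\Delta_{\mathrm{s}}\le 0$ multiplies the decreasing quantity $\sqrt{(1+\man c)^2+K_n}-\man c$), which makes $F$ convex in $c$ and rules out interior maxima; the optimum must lie at $c=0$ or $c=\bar c$. Comparing $F(0;1)$ with $F(\bar c;1)$ reduces, after dividing by the negative quantity $\Delta_{\mathrm{s}}$ and carefully reversing the inequality, to the condition $R\lessgtr\varrho_{\mathrm{s}}$. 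This sign reversal, together with the algebraic verification that the threshold coming out of the boundary comparison indeed coincides with the stated $\varrho_{\mathrm{s}}$, is the most error-prone step of the argument. Collecting the resulting subcases reproduces the eight lines of the theorem.
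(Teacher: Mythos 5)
Your proposal is correct and follows essentially the same route as the paper's proof in Appendix~E: restrict to stationary regimens via Proposition~\ref{prop: equalFrac}, reduce to the univariate objective $\left(\Delta_{\mathrm{r}}+\man\Delta_{\mathrm{s}}c\right)D_{\mathrm{s}}(c;n)$, and case-split on the signs of $\Delta_{\mathrm{r}},\Delta_{\mathrm{s}}$ using the sign of the bracketed factor in the derivative to locate the thresholds $\underline{\rho}_{\mathrm{s}},\overline{\rho}_{\mathrm{s}},\varrho_{\mathrm{s}},-\man\bar{c}$. The only loose point is calling $F$ ``convex'' in the case $\Delta_{\mathrm{r}},\Delta_{\mathrm{s}}\le 0$ --- the precise statement is that the derivative changes sign at most once, from negative to positive, which already rules out an interior maximum and forces the boundary comparison you describe.
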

The proof is in Appendix \ref{sec: appendix_synergistic}. Figure \ref{fig: synergistic} illustrates the results presented in Theorem \ref{theorem: synergistic} for different values of $\Delta_{\mathrm{r}}$ and $\Delta_{\mathrm{s}}$. We make the following observations regarding the impact of radio-sensitizers on the optimal fractionation regimens:
 \begin{itemize}
\item[1.]  Since it is assumed that there is not any independent cytotoxic activity associated with the chemotherapeutic agent, radiation is always administered in the optimal treatment regimen.
\item[2.] The optimal radiation fractionation regimen and drug concentration always depend on the ratio $\Delta_{\mathrm{r}}/\Delta_{\mathrm{s}}$ except for the case of $ \Delta_{\mathrm{r}} \geq 0, \Delta_{\mathrm{s}} \leq 0 $. The transitioning thresholds are specified by $\underline{\rho}_{\mathrm{s}}$, $\overline{\rho}_{\mathrm{s}}$, $\varrho_{\mathrm{s}}$, and $ -\man \bar{c}$.
\item[3.] For standard fractionation regimens, there is a range of $ \Delta_{\mathrm{r}}/\Delta_{\mathrm{s}}$, that is, $\left(\underline{\rho}_{\mathrm{s}},\overline{\rho}_{\mathrm{s}}\right)$, within which an intermediate drug-concentration level is optimal. However, for hypo-fractionated regimens, the drug is either not used at all or delivered at maximum concentration level.
\item[4.] The administration of radio-sensitizers may alter the fractionation decision in radiotherapy. More specifically, there are scenarios ($ \Delta_{\mathrm{r}} \leq 0, \Delta_{\mathrm{s}} \geq 0 $) in which a hypo-fractionated regimen is optimal if no radio-sensitizer is used; however, in the presence of the radio-sensitizer, it is optimal to switch to a standard fractionation regimen. This aspect will be further discussed in Section \ref{sec: discussion}.
\end{itemize}

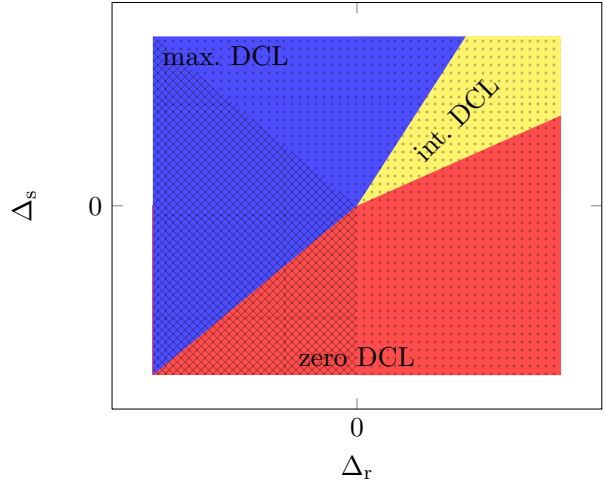
\begin{figure}[h]
\center
\begin{tikzpicture}[scale=0.95]
	\begin{axis}[xtickmin=0,xtickmax=0,ytickmin=0,ytickmax=0, xlabel=$\Delta_{\mathrm{r}}$,ylabel=$\Delta_{\mathrm{s}}$,domain=-15:15,
			area legend, legend style={anchor=north west}]
		\addplot [fill=yellow!70,draw=none] coordinates {(0,0) (0,15) (15,15) (15,0)} \closedcycle;
		\addplot [fill=red!70,draw=none] coordinates {(0,0) (0,-15) (-15,-15) (-15,0)} \closedcycle;
		\addplot [fill=red!70,draw=none] coordinates {(0,0) (0,-15) (15,-15) (15,8) (0,0)} \closedcycle;
		\addplot [fill=blue!70,draw=none] coordinates {(0,0) (-15,-15) (-15,15) (8,15) (0,0)} \closedcycle;
		\addplot [ fill=gray!80,opacity = 0.5,pattern=dots,draw=none] coordinates {(0,-15) (0,0) (-15,15) (0,15) (15,15) (15,-15) (0,-15)} \closedcycle;
		\addplot [ fill=gray!80,opacity = 0.5,pattern=crosshatch,draw=none] coordinates {(0,-15) (0,0) (-15,15) (-15,-15) (0,-15) } \closedcycle;
		\node at (axis cs: -15,15) [anchor=north west] {max.\ DCL};
		\node at (axis cs: 0,-15) [anchor=south]{zero DCL};
		\node at (axis cs:   3,5) [anchor=north west,rotate=45]{int.\ DCL};	
    \end{axis}
\end{tikzpicture}
\caption{\label{fig: synergistic} \small{Schematic of optimal regimens for CRT with only the radio-sensitization mechanism. Zero, intermediate, and maximum drug-concentration level (DCL) correspond to $c^*=0$, $0<c^*< \bar{c}$, and $c^*=\bar{c}$, respectively. Dotted and hatched regions illustrate whether standard or hypo-fractionated radiotherapy is optimal, respectively. }}
\end{figure}

\subsection{CRT with additivity and radio-sensitization mechanisms} \label{sec: combined}
Several chemotherapeutic agents such as Gemcitabine and Cisplatin are suggested to have radio-sensitization properties as well as independent cytotoxic activity \citep{Hall2006}. For those agents that exhibit both additivity and radio-sensitization mechanisms, using the definition of $\Delta_{\mathrm{r}}$, $\Delta_{\mathrm{a}}$, and $\Delta_{\mathrm{s}}$, the objective function of (M) in (\ref{eqn: obj}) is
\begin{align*}
	 \sum_{i=1}^n B_{\mathrm{T}}\left(d_i,c_i\right) &= \frac{1}{\overline{\delta^2}}\frac{\abn}{\abt}\en + \Delta_{\mathrm{r}} \sum_{i=1}^n d_i  + \Delta_{\mathrm{a}} \left(\can \sum_{i=1}^n c_i\right) +  \Delta_{\mathrm{s}}\left( \man \sum_{i=1}^n c_id_i \right).
\end{align*}
To obtain optimal treatment regimens for this case, we start with the following special cases:
\begin{itemize}
	\item[1.] $\Delta_{\mathrm{r}} \geq 0$, $\Delta_{\mathrm{a}} < 0$, and $\Delta_{\mathrm{s}} < 0$: it is easy to see that the optimal drug concentration is $c_i^*=0\;(i=1,\ldots,n)$ since $\Delta_{\mathrm{a}} < 0$ and $\Delta_{\mathrm{s}} < 0$  discourage any positive drug concentration. Therefore, the problem reduces to radiotherapy alone. In particular, since $\Delta_{\mathrm{r}} \geq 0$, a standard radiation fractionation scheme is optimal. 
	\item[2.] $\Delta_{\mathrm{r}} < 0$, $\Delta_{\mathrm{a}} \geq 0$, and $\Delta_{\mathrm{s}} < 0$: in this case, $\Delta_{\mathrm{r}} < 0$ and $\Delta_{\mathrm{s}} < 0$ discourage any positive radiation dose. Thus, the optimal treatment regimen uses only chemotherapy, that is, $d_i^*=0 \;(i=1,\ldots,n)$. However, this is only the case if condition (\ref{condition: add}) is satisfied; otherwise, $d_i^*>0$ for some $i$ and thus the optimal regimen also involves radiotherapy.
\end{itemize}
In general, it may not be possible to derive closed-form expressions for the optimal treatment regimens if both mechanisms are present. This is mainly because optimal stationary regimens may not necessarily exist. In fact, the numerical example in Section \ref{sec: example} shows that chemotherapeutic agents with both mechanisms may give rise to optimal non-stationary regimens. Therefore, we develop a DP algorithm to obtain optimal fractionation regimens to (M). The \emph{stages} of the DP algorithm refer to treatment fractions $i=1,\ldots,n$. The \emph{control variables} are the radiation dose and drug concentration $\left(d_i,c_i\right)$ used at stage $i\,\left(i=1,2,\ldots,n\right)$. The \emph{state} of the system at the beginning of stage $i$, denoted by $x_i$, is the mean of the cumulative BED in the normal tissue. Thus, the state dynamics is 
\begin{align*}
	 x_{i+1} &= x_i + \frac{1}{|V|}\sum_{v\in V} B_{\mathrm{N}}\left(\delta_vd_{i},c_{i}\right) & i & =1,\ldots,n,
\end{align*}
where $x_{n+1}$ represents the state of the system at the end of the treatment. At each stage $i\,\left(i=1,\ldots,n\right)$ we consider an intermediate reward of $B_{\mathrm{T}}\left(d_i,c_i \right)$ measuring the BED delivered to the target during that stage. To ensure that the solution satisfies the constraint in (\ref{eqn: BED}), we also define a terminal reward function, denoted by $R_{n+1}\left(x_{n+1}\right)$, as follows: 
\begin{align*}
	R_{n+1}\left(x_{n+1}\right) &= \begin{cases}
	0 & x_{n+1}\leq \en ;\\
	-\infty & x_{n+1} > \en. \end{cases}
\end{align*} 
Associated with stage $i\,\left(i=1,\ldots,n\right)$ we define the \emph{cost-to-go} function, denoted by $J_i \left(x_i \right)$, as the maximum cumulative reward that can be attained in remaining treatment fractions $i,\ldots,n$ when starting with state $x_i$. We can then write the Bellman's recursive equations as follows:
\begin{align}
	J_i \left(x_i\right) &= \max_{\substack{0 \leq c_i \leq \bar{c} \\ d_i \geq 0}} \left\{ B_{\mathrm{T}}\left(d_i,c_i \right) +  J_{i+1} \left( x_i + \frac{1}{|V|}\sum_{v\in V} B_{\mathrm{N}}\left(\delta_vd_{i},c_{i}\right) \right) \right\} & i & =1,\ldots,n  \label{eqn: bellman1} \\
	J_{n+1} \left(x_{n+1}\right) &= R_{n+1}\left(x_{n+1}\right). \label{eqn: bellman2} 
\end{align}
The Bellman's equations can be solved using backward recursion. This requires a discretization of the state variables $x_i\,\left(i=1,\ldots,n\right)$. In oder to increase the accuracy of the evaluation of $J_i$ at non-discretized $x_i$ values, a linear interpolation of appropriate discretized values is performed. Our DP framework is used to find optimal solutions to (M) when both additivity and radio-sensitization mechanisms are relevant.

\section{Numerical examples}\label{sec: example}
In this section we discuss how the results obtained in Section \ref{sec: solution} can be applied to treatment plans designed for individual patients. As a proof of principle, we use a photon and proton treatment plan for a locally-advanced lung cancer case, which are extracted from \cite{Zhang2010}. In particular, we assume that the target receives a uniform dose while the normal lung, as the dose-limiting normal tissue, receives a heterogeneous dose distribution. \emph{Dose-volume histogram} (DVH) curves for the normal lung, associated with the photon and proton plans, are illustrated in Figure \ref{fig: DVH}. Clearly, the proton plan allows for a better sparing of the normal lung. For each DVH curve, the first and second moments of the sparing factors $\delta_v \,(v \in V)$, which are input parameters to the mathematical results in Section \ref{sec: solution}, are calculated. We then use the results in Section \ref{sec: solution} to determine the optimal fractionation regimens and drug administration schedules for these two radiotherapy plans. We use common BED parameters of $\abt=10$ Gy and $\abn=4$ Gy for the target and normal lung, respectively. We let $\en=25$ Gy as the upper bound on the average normal-lung BED. We also set $\bar{c}=1$ considering a normalized maximum drug concentration. Moreover, we consider a maximum number of $n=30$ treatment fractions.

\begin{figure}[h]
\center
\begin{tikzpicture}[scale=0.8]
\pgfplotsset{height=6cm, width=9cm, no markers}
\begin{axis}[scale only axis, xmin=0, xmax=1, ymin=0, ymax=1,xtick={0,0.2,0.4,0.6,0.8,1}, ylabel={fractional volume}, xlabel={fractional dose},grid=major]
		\addplot[thick,color=black] coordinates {
( 0,0.997)
(0.0131,0.934)
(0.0415,0.798)
(0.0459,0.789)
(0.0852,0.708)
(0.0873,0.705)
(0.133,0.649)
(0.175,0.596)
(0.177,0.595)
(0.216,0.567)
(0.221,0.564)
( 0.26,0.541)
(0.264,0.539)
(0.306,0.525)
(0.308,0.524)
(0.352,0.504)
(0.393,0.482)
(0.437,0.459)
( 0.48,0.435)
(0.522,0.406)
(0.524,0.404)
(0.566,0.375)
(0.568,0.374)
(0.609,0.345)
(0.655,0.321)
(0.657,0.319)
(0.697,0.289)
( 0.74,0.271)
(0.745,0.269)
(0.784,0.253)
(0.786,0.252)
(0.828,0.236)
(0.869,0.207)
(0.915, 0.17)
(0.937,0.127)
(0.939,0.122)
( 0.95,0.0797)
(0.956,0.0664)
(0.969,0.0398)
(1,0)};
\addplot[thick,color=black,dashed] coordinates {
(0,0.995)
(0.0131, 0.54)
(0.0415,0.505)
(0.0459,  0.5)
(0.0852,0.473)
(0.0873,0.471)
(0.133, 0.45)
(0.175,0.435)
(0.177,0.434)
(0.216,0.422)
(0.221,0.421)
( 0.26,0.407)
(0.264,0.405)
(0.306,0.395)
(0.308,0.394)
(0.352,0.379)
(0.393,0.368)
(0.437, 0.36)
( 0.48,0.342)
(0.522,0.324)
(0.524,0.323)
(0.566,0.307)
(0.568,0.306)
(0.609,0.289)
(0.655,0.271)
(0.657, 0.27)
(0.697,0.252)
( 0.74,0.237)
(0.745,0.236)
(0.784,0.218)
(0.786,0.217)
(0.828,0.199)
(0.869,0.173)
(0.915,0.143)
(0.937,0.0956)
(0.939,0.0909)
( 0.95,0.0672)
(0.956,0.0531)
(0.969,0.0372)
(1,0)
};
\node at (axis cs:0.4,0.6) [anchor=north west] {Photon plan};
\node at (axis cs: 0.3,0.25) [anchor=south]{Proton plan};
\end{axis}
\end{tikzpicture}
\caption{\label{fig: DVH} \small{DVH curves of the fractional dose in the normal lung for the photon (solid) and proton (dashed) treatment plans provided in \cite{Zhang2010}, for a locally-advanced lung cancer case. }}
\end{figure}
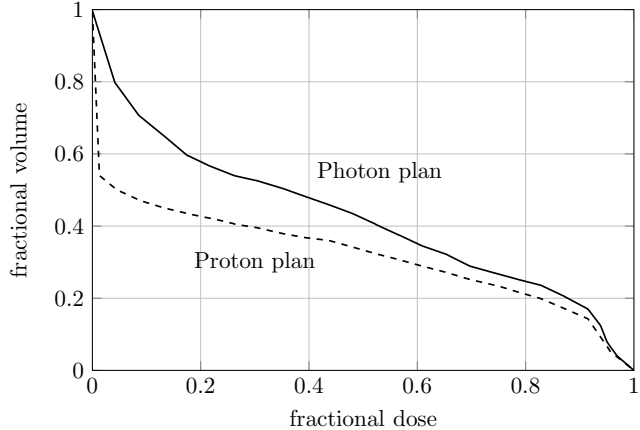

\paragraph{Radiotherapy} Using equations from \cite{Unkelbach2013} to determine the first and second moments of sparing factors from the DVH curves, we obtain $\bar{\delta}=0.42$, $\overline{\delta^2}=0.31$ and $\sfe = 0.74$ for the photon plan and $\bar{\delta}=0.32$, $\overline{\delta^2}=0.25$, and $\sfe=0.78$ for the proton plan. We first consider the case of radiotherapy alone and calculate $\Delta_{\mathrm{r}}$, which is 0.46 and 0.49 for the photon and proton plan, respectively. Since $\Delta_{\mathrm{r}} \geq 0$, a stationary standard fractionation regimen is optimal for both the photon and proton plans.

\paragraph{CRT with only the  additivity mechanism}  According to Theorem \ref{theorem: additive}, since $\Delta_{\mathrm{r}} \geq 0$, in order to use the drug in the optimal regimen, condition $\Delta_{\mathrm{r}}/\Delta_{\mathrm{a}} \leq \overline{\rho}_{\mathrm{a}}$ should be satisfied, which yields $\Delta_{\mathrm{a}} \geq  0.70$ and $\Delta_{\mathrm{a}} \geq  0.88$ for the photon and proton plan, respectively. Therefore, the optimal regimen uses the drug along with radiation if $\cat/\can \geq 1.99$ and $\cat/\can \geq 2.48$ for the photon and proton plan, respectively. This suggests that, to use the chemotherapeutic agent with the proton plan, the relative additive effect of the drug in the target versus normal lung ($\cat/\can$) should be 25\% larger compared to the photon plan. The difference between the threshold values is attributed to the better sparing of the normal lung achieved using the proton plan, rendering the use of a systemic agent with a small relative additive effect unattractive. Furthermore, the optimal treatment regimen uses only the chemotherapeutic agent if $\cat/\can \geq 2.39$ and $\cat/\can \geq 3.13$  in case of the photon and proton plan, respectively. Thus, it is optimal to combine the photon and proton plan with chemotherapeutic agents only if the corresponding relative additive effect lies within $\left[1.99,2.39 \right]$ and $\left[2.48,3.13\right]$, respectively.

\paragraph{CRT with only the radio-sensitization mechanism} We next investigate the impact of radio-sensitizers on the optimal regimens for the two plans. According to Theorem \ref{theorem: synergistic}, since $\Delta_{\mathrm{r}} \geq  0$, in order to have a positive drug concentration in the optimal regimen, condition $\Delta_{\mathrm{r}}/\Delta_{\mathrm{s}} \leq \overline{\rho}_{\mathrm{s}}$ should be satisfied, which yields $\Delta_{\mathrm{s}} \geq  0.29$ and $\Delta_{\mathrm{s}} \geq  0.28$, or equivalently  $\mat/\man \geq 0.83$ and $\mat/\man \geq 0.79$ for the photon and proton plan, respectively. Hence, the relative radio-sensitization effect in the target versus normal lung ($\mat/\man$) for the photon plan is required to be 5\% larger compared to the proton plan. The difference in the threshold values is due to the larger radiation dose delivered by the photon plan in the normal lung. This, in turn, increases the radio-sensitization activity and thus the radiation toxicity in the normal lung. In contrast, a proton plan can achieve a better sparing of the normal lung, thereby reducing the radio-sensitization activity. 

\paragraph{CRT with additivity and radio-sensitization mechanisms} Finally, we consider the impact of chemotherapeutic agents with combined mechanisms on the optimal regimens. We apply our DP algorithm discussed in Section \ref{sec: solution} to the photon and proton treatment plans for different values of $\cat, \can, \mat, \man \geq 0$. More specifically, we solve instances of (M) where $\cat$, $\can$, $\mat$, and $\man$ assume values in $\left\{0.1,0.15,\ldots,1\right\}$ corresponding to scenarios in which the relative importance of additivity and radio-sensitization effects in (\ref{def: BED}) varies from 10\% to 100\% of the radiation BED. A discretization step of 0.05 was used for the state and control variables. An important observation is that the inclusion of both effects may lead to non-stationary optimal regimens. Figure \ref{fig: non_stationary} depicts an example of a non-stationary regimen for the photon plan with $\Delta_{\mathrm{a}} < 0$ and $\Delta_{\mathrm{s}} > 0$ indicating that the additive and radio-sensitization effect is unfavorable and favorable, respectively. The optimal regimen uses the chemotherapeutic agent in only a subset of treatment fractions with an escalated radiation dose in those fractions to benefit from the radio-sensitization mechanism. Note that the model does not consider the sequence of fractions used in the fractionation regimen, and thus the solutions are insensitive to the order according to which the fractions are administered. This case is further discussed in Section \ref{sec: discussion}. Figure \ref{fig: combined} illustrates the optimal regimens for different parameter values obtained by the DP algorithm for the photon treatment plan. One major difference of the schematic of optimal regimens for the combined mechanisms from those of individual mechanisms is the lack of sharp boundaries between sub regions. More specifically, except for the third quadrant ($\Delta_{\mathrm{a}} < 0$, $\Delta_{\mathrm{s}} < 0$), where it can be analytically shown that RT-std is the only optimal treatment regimen (see Section \ref{sec: combined}), the boundaries are not exclusive. Therefore, the dominant treatment regimen within each sub region is illustrated in Figure \ref{fig: combined}. In particular, in the first quadrant ($\Delta_{\mathrm{a}} \geq 0$, $\Delta_{\mathrm{s}} \geq 0$), where both additive and radio-sensitization effects are favorable, stationary CRT regimens are the dominant form of optimal treatment regimens. In the second quadrant ($\Delta_{\mathrm{a}} < 0$, $\Delta_{\mathrm{s}} \geq 0$), where the additive effect is unfavorable, non-stationary CRT regimens are optimal. The example in Figure \ref{fig: non_stationary} belongs to this quadrant. The optimal treatment regimen in the third quadrant ($\Delta_{\mathrm{a}} < 0$, $\Delta_{\mathrm{s}} < 0$) is radiotherapy only using a standard fractionation regimen, as discussed in Section \ref{sec: combined}. Finally, in the fourth quadrant ($\Delta_{\mathrm{a}} \geq 0$, $\Delta_{\mathrm{s}} < 0$), where the radio-sensitization effect is unfavorable, it might be still desirable to administer the chemotherapeutic agent sequentially with radiation (rather than concurrently) to avoid the unfavorable radio-sensitization effect and at the same time to benefit from the additive effect. This leads to the dominance of sequential CRT (neoadjuvant or adjuvant) regimens in this quadrant.

\begin{figure}[h]
\center
\begin{tikzpicture}[scale=0.8]
\pgfplotsset{height=5cm, width=11cm,no markers}
\begin{axis}[scale only axis, xmin=0, xmax=60, xticklabels={1,5,10,15,20,25,30}, xtick={1,9,19,29,39,49,59}, ymin=0, ymax=5, bar width = 4.5pt, ylabel={radiation (Gy)}, ylabel style = {red}, yticklabel style={color=red}, xlabel={fraction \#}]
	\addplot[ybar,color=red, fill] coordinates {(1,4.3) (3,0.9) (5,0.9) (7,0.9) (9,0.9) (11,0.9) (13,0.9) (15,0.9) (17,0.9) (19,4.3) (21,0.9) (23,0.9) (25,0.9) (27,0.9) (29,0.9) (31,0.9) (33,0.9) (35,0.9) (37,0.9) (39,4.3) (41,0.9) (43,0.9) (45,0.9) (47,0.9) (49,0.9) (51,0.9) (53,0.9) (55,0.9) (57,0.9) (59,0.9)};
\end{axis}

\begin{axis}[scale only axis, axis y line*=right, xtick={1,9,19,29,39,49,59}, xticklabels= {}, ytick= {0,1,2,3,4,5},yticklabels={0,$\frac{1}{3}$,$\frac{2}{3}$,1,$\frac{4}{3}$,$\frac{5}{3}$},yticklabel style={color=blue}, ylabel={ DCL}, ylabel style={blue,at={(1.2,0.5)}}, xmin=0, xmax=60, ymin=0, ymax=5]
	\addplot[bar width = 4.5pt, ybar,color=blue, fill] coordinates {(2,3) (20,3) (40,3)};
    \end{axis}	
\end{tikzpicture}
\caption{\label{fig: non_stationary} \small{An example of non-stationary optimal treatment regimens (insensitive to the fraction sequence) for the photon plan where $\Delta_{\mathrm{r}} > 0$, $\Delta_{\mathrm{a}} < 0$, and $\Delta_{\mathrm{s}} > 0$, using $\cat = 0.3$, $\can = 0.65$, $\mat = 0.65$, and $\man= 0.25$ (DCL: drug-concentration level).}}
\end{figure}
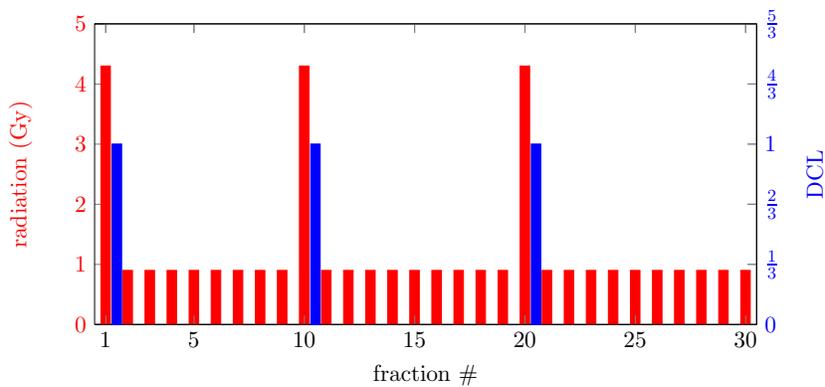

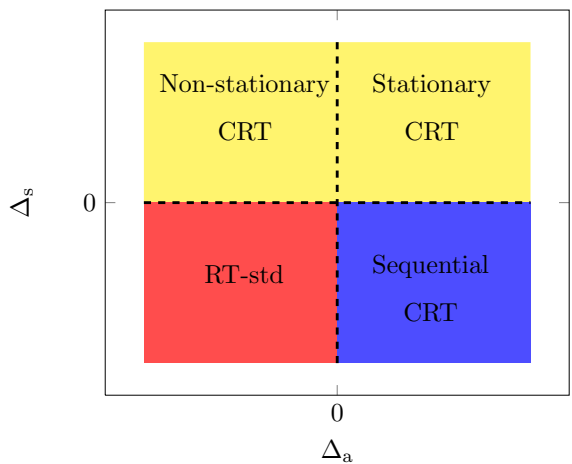
\begin{figure}[h]
\center
\begin{tikzpicture}[scale=0.90]
	\begin{axis}[xtickmin=0,xtickmax=0,ytickmin=0,ytickmax=0, xlabel=$\Delta_{\mathrm{a}}$,ylabel=$\Delta_{\mathrm{s}}$,domain=-15:15,area legend, legend style={anchor=north west}]
		\addplot [fill=yellow!70,draw=none] coordinates {(-15,0) (-15,15) (15,15) (15,0) (0,0) (-15,0)} \closedcycle;
		\addplot [fill=red!70,draw=none] coordinates {(0,0) (-15,0) (-15,-15) (0,-15) (0,0)} \closedcycle;
		\addplot [fill=blue!70,draw=none] coordinates {(0,0) (15,0) (15,-15) (0,-15) (0,0)} \closedcycle;
		\node at (axis cs: -14.5,5) [anchor=south west,align=center] {Non-stationary \\ CRT};
		\node at (axis cs: 2,5) [anchor=south west,align=center] {Stationary \\ CRT};
		\node at (axis cs: -11,-5) [anchor=north west]{RT-std};
		\node at (axis cs:  2,-4) [anchor=north west,align=center]{Sequential \\  CRT};
		\draw[very thick,dashed] (axis cs: 0,-15) -- (axis cs: 0,15);
		\draw[very thick,dashed] (axis cs: -15,0) -- (axis cs: 15,0);
    \end{axis}
\end{tikzpicture}
\caption{\label{fig: combined} \small{Schematic of optimal regimens for CRT with additivity and radio-sensitization mechanisms obtained by the DP algorithm for the photon plan with $\Delta_{\mathrm{r}} = 0.46$.}}
\end{figure}


\section{Discussion} \label{sec: discussion}
In this section we comment on the CRT fractionation regimens provided by the model. In particular, using the results discussed in Sections \ref{sec: solution} and \ref{sec: example}, we make the following observations:  
\paragraph{Diminishing return in target BED leads to combined-modality treatments}
For the case of radiotherapy alone, using equations (\ref{eqn: obj_rad}) and (\ref{eqn: D_rad}) and substituting $\en$ with variable $b$, we can write the target BED as a function of average normal-tissue BED for a given number of treatment fractions $n$ as follows:
\begin{align*}
	 H\left(b;n \right) = \frac{1}{\overline{\delta^2}}\frac{\abn}{\abt}b + \Delta_{\mathrm{r}} \frac{-1 + \sqrt{1+4\sfe b / n\bar{\delta}\abn}}{2\sfe/n\abn}.
\end{align*}
In particular, $H\left(\cdot;n\right)$ characterizes the trade-off between the target BED and the average normal-tissue BED. It is easy to verify that $H\left(\cdot;n\right)$ is a concave function if $\Delta_{\mathrm{r}} \geq 0$, and convex otherwise (see Appendix \ref{sec: appendix_tradeoff}). Therefore, when standard fractionation regimen is optimal, the concavity of the trade-off curve suggests a diminishing return in the target BED as one allows for an increase in the average normal-tissue BED. This implies that the introduction of a second modality such as chemotherapeutic agents with additivity mechanism, may lead to a larger gain in the target BED rather than solely escalating the radiation dose. Thus, the diminishing return in the target BED explains the existence of CRT optimal regimens in Theorem \ref{theorem: additive} despite assuming no interactive cooperation between the drug and radiation.


\paragraph{Chemotherapeutic agents with additivity mechanism do not change optimal radiation fractionation regimens} 
According to Theorem \ref{theorem: additive}, the condition for optimality of  hypo- and standard fractionation regimen is $\Delta_{\mathrm{r}} < 0$ and $\Delta_{\mathrm{r}} \geq 0$, respectively. These conditions are identical to those of the radiotherapy-alone case, which suggests that the introduction of chemotherapeutic agents with only the additivity mechanism does not lead to a change in the optimal radiation fractionation scheme. However, the relative additive effect in the target versus the normal tissue ($\cat/\can$) determines the extent to which the drug is used in the optimal regimen. In particular, Figure \ref{fig: additive} illustrates that if $\Delta_{\mathrm{r}} \geq 0$ and $\Delta_{\mathrm{a}} \geq 0$ (first quadrant), then as $\cat/\can$ and, in turn, $\Delta_{\mathrm{a}}$ increases, the optimal regimen transitions from RT-std to CRT-std and from CRT-std to CT. Similarly, if $\Delta_{\mathrm{r}} < 0$ and $\Delta_{\mathrm{a}} < 0$ (third quadrant), then as $\cat/\can$ and thus $\Delta_{\mathrm{a}}$ increases, the optimal regimen changes from RT-hypo to CT.

\paragraph{Radio-sensitizers may change optimal hypo-fractionated radiation regimens} 
In contrast to the additivity mechanism, the radio-sensitization mechanism may lead to a change in the optimal radiation fractionation scheme. More specifically, if $\Delta_{\mathrm{r}} < 0$, then in the absence of radio-sensitizers, a hypo-fractionation radiation regimen is optimal. However, this is not necessarily the case if a radio-sensitizer is being introduced. Figure \ref{fig: synergistic} shows that if $\Delta_{\mathrm{r}} < 0$ and $\Delta_{\mathrm{s}} \geq 0$ (second quadrant), then the maximum drug concentration allowed is optimal. Moreover, if the relative radio-sensitization effect in the target versus normal tissue ($\mat/\man$) is sufficiently large, then the optimal radiation scheme changes from hypofrcationation to standard fractionation. 
This change in the fractionation regimen is explained as follows: since the maximum drug concentration is administered at each fraction $i=1,2,\ldots,n$, then (\ref{def: BED}) can be rewritten as
\begin{align*}
  	\frac{B \left(d_i,\bar{c} \right)}{1+\ma \bar{c}} &= d_i \left(1 + \frac{d_i}{\ab \left(1+\ma \bar{c}\right)}\right).
\end{align*}
This suggests the use of condition (\ref{condition: RT}) as discussed in Section \ref{sec: radiation} to determine the optimal radiation fractionation regimen using an \emph{effective} $\ab$ ratio for the target and normal tissue, which are $\abt \left(1+\mat \bar{c}\right)$ and $\abn \left(1+\man \bar{c}\right)$, respectively. Thus, for the case of $\Delta_{\mathrm{r}} < 0$ and $\Delta_{\mathrm{s}} \geq 0$, if
\begin{align*}
	\abn\left(1+\man \bar{c}\right) \leq \sfe \abt\left(1+\mat \bar{c}\right),
\end{align*}
then a standard fractionation regimen is optimal; otherwise, a hypo-fractionation regimen is optimal. In particular, if $\mat$ is sufficiently larger than $\man$ (case of chemotherapeutic agents with preferential radio-sensitization activity in the target versus normal tissue), then the effective $\ab$ ratio in the target is larger than that of the normal tissue, thereby tipping the scales toward standard radiation fractionation regimen.
  
\paragraph{Concurrent CRT with both mechanisms may favor non-stationary fractionation regimens} 
The numerical example illustrated in Figure \ref{fig: non_stationary} presents a case for which a non-stationary treatment regimen is optimal. To provide an explanation for the existence of optimal non-stationary regimens, we consider the additivity and radio-sensitization mechanisms in this example independently: for the additivity mechanism, since $\Delta_{\mathrm{r}} \geq 0$ and $\Delta_{\mathrm{a}} < 0$, Theorem \ref{theorem: additive} does not recommend the administration of the chemotherapeutic agent. On the other hand, for the radio-sensitization mechanism, since $\Delta_{\mathrm{r}} \geq 0$ and $\Delta_{\mathrm{s}} \geq 0$, Theorem \ref{theorem: synergistic} shows that the use of a standard radiation fractionation regimen along with a maximum drug concentration is optimal. Hence, if both mechanisms are simultaneously considered, then the optimal regimen uses the drug in only a subset of fractions with an escalated radiation dose in those fractions. This example may represent the scenario in which the chemotherapeutic agent has relatively high toxicity in the normal tissue (small $\cat/\can$) and at the same time relatively strong radio-sensitization effect in the target (large $\mat/\man$). Therefore, for such chemotherapeutic agents, the use of non-stationary treatment regimens may be optimal.

\paragraph{Decision on administration of chemotherapeutic agents depends on the spatial dose distribution} 
The numerical example in Section \ref{sec: example} compares the fractionation regimens for a photon and proton treatment plan. The difference between the two is attributed to a better sparing of the normal lung achieved by the proton therapy and thus smaller values of sparing factors, $\bar{\delta}$ and $\overline{\delta^2}$ (note that smaller sparing factors do not necessarily lead to a smaller $\sfe$). In particular, in order to use chemotherapeutic agents with only the additivity mechanism along with the proton treatment, larger values of $\cat/\can$ are required compared to the photon treatment. This is because administration of a systemic agent with the proton treatment is not beneficial unless its relative additive effect ($\cat/\can$) is sufficiently large. In other words, chemotherapeutic agents that are used concomitantly with proton plans are expected to have larger selectivity in their additivity mechanism. On the other hand, in order to use radio-sensitizers along with proton treatments, smaller values of $\mat/\man$ are acceptable compared to photon treatments. This is because photon treatments typically deliver low doses of radiation to a large part of the lung, which is so-called \emph{low-dose bath} (see the DVH curve of normal lung for the photon plan in Figure \ref{fig: DVH}). The radio-sensitization mechanism increases the normal-lung toxicity caused by the low-dose bath. In contrast, the proton treatment has a significantly smaller low-dose bath, thereby reducing the radio-sensitization activity in the normal lung. Hence, it may be still beneficial to administer radio-sensitizers with a smaller preferential effect (smaller $\mat/\man$) along with the proton treatment. 

\paragraph{Limitations of the study and future research directions}
Although the mechanism of action for chemotherapeutic agents are suggested to be mainly of the additivity and radio-sensitization forms, the extent of these effects may vary across different chemotherapeutic agents and patients. Hence, the quantification of these effects is challenging \citep{Wheldon1988}. Moreover, the validity of the BED model and its parameter estimate have been questioned (see, e.g., \cite{Brenner2008,Kirkpatrick2008}). Therefore, the role of mathematical optimization in CRT is not to recommend alternative clinical practice but to generate hypotheses that can inspire the design of new clinical trials. This study presents an initial attempt at developing a mathematical framework for hypothesis generation in CRT treatment planning. This framework can be further extended to accommodate additional clinical considerations as follows:
\begin{itemize}
	\item For the proof of concept, we used a linear and multiplicative term in (\ref{def: BED}) to model the additivity and radio-sensitization mechanisms, respectively. As an extension to this work, we plan to study the structure of optimal treatment regimens for generalized BED models where these mechanisms are accounted for using functions of other forms in terms of the drug's concentration and radiation dose. 
	\item The presented results are based on the assumption that the dose-response parameters can be accurately estimated. However, there may be uncertainty associated with parameter estimation or variation across different patients. \cite{Davison2011} studied the impact of the uncertainty in radio-sensitivity parameters on the optimal radiotherapy fractionation regimens and cautioned against the resulting variability in tumor-cell kill. In particular, they characterized optimal fractionation regimens with minimal risk of treatment failure for different uncertainty levels of the LQ model parameters. Furthermore, it is suggested that some tumor reoxygenation may occur after a few fractions, altering the tumor radio-sensitivity during the course of the treatment \citep{carlson2006}. Our framework can be extended to account for these sources of parameter uncertainty. 
	\item A major factor impacting the fractionation decision is tumor repopulation during the course of the treatment, particularly in treatment sites with a fast proliferating rate such as head-and-neck cancers. If unaccounted for, tumor repopulation may severely compromise the treatment outcome (\cite{Withers1988}). \cite{Ramakrishnan2013} has shown that in contrast to \emph{exponential} repopulation for which a stationary and finite radiation fractionation regimen is optimal, the incorporation of \emph{accelerated} repopulation may lead to non-stationery optimal regimens. Although in our current framework the repopulation effect has not been explicitly accounted for, the bound enforced on the maximum number of treatment fractions limits this effect. Our framework can be extended to explicitly incorporate the impact of accelerated repopulation in the fractionation decision for chemoradiotherapy.
	\item It is assumed in our current framework that the drug administered at each treatment fraction does not carry over to subsequent fractions. This assumption has served as the basis for formulating the set of constraints in (\ref{eqn: drug}). However, this may not be necessarily valid for some chemotherapeutic agents. Therefore, our framework should be tailored for individual chemotherapeutic agents by accommodating the specifics of the drug pharmacokinetics.      
	\item Our current formulation considers the CRT fractionation decision for a given spatial dose distribution. More specifically, it is assumed that the spatial aspect of the treatment plan is already determined and only the \emph{temporal} aspect is optimized. Hence, the optimal solution to the CRT fractionation problem may only scale the spatial dose distribution up or down. However, as previously discussed in Section \ref{sec: discussion}, the radiation dose distribution influences the radio-sensitization effect. Therefore, fixing the spatial dose distribution limits the therapeutic gain that might be achieved through optimizing the fractionation regimen. This has  been rightfully pointed out in other studies (see, e.g., \cite{bernier2003}) stating that concurrent CRT may change the desirable radiation dose distribution with respect to the normal tissue and target volume. \emph{Spatio-temporal planning} aims at simultaneously optimizing for the spatial and temporal aspects of the treatment plan and has been previously studied for radiotherapy (see, e.g., \cite{Kim2012}). We plan to investigate the extension of spatio-temporal treatment planning to concurrent CRT.    
\end{itemize}

\section{Conclusion} \label{sec: conclusion}
We developed a mathematical framework to study the impact of chemotherapeutic agents on optimal fractionation regimens. We considered an extension of the BED model to incorporate two major mechanisms of action for chemotherapeutic agents, which are additivity and radio-sensitization. We then derived closed-form solutions to the fractionation problem for each individual mechanism, and developed a DP algorithm to solve the fractionation problem with combined mechanisms. Results suggest that chemotherapeutic agents with only an additive effect do not change the optimal radiation fractionation regimens; however, radio-sensitizers may change hypo-fractionation regimens to standard ones. Moreover, chemotherapeutic agents with both effects may give rise to non-stationary regimens. Lastly, the spatial dose distribution may impact the use of chemotherapeutic agents along with radiation treatments. The demonstrated results motivate future research to extend this framework to spatio-temporal planning and to account for additional clinical considerations related to tumor repopulation as well as the mechanism of action and pharmacokinetics of chemotherapeutic agents. Such framework may assist clinicians with hypothesis generation to design novel CRT fractionation schemes that can be tested in clinical trials.

\section*{Acknowledgements}
The authors would like to thank David Craft, Jagdish Ramakrishnan, and  Edwin Romeijn for helpful discussions and feedback.

\newpage

\newpage
\begin{appendices}
\appendixpage 
\numberwithin{equation}{section}
\setcounter{equation}{0}

\section{Extending the BED Model to Concurrent CRT} \label{sec: appendix_BEDExt}
In this section, we motivate the extended BED model defined in Section \ref{sec: crtfx-dose-response} through an extension of the LQ model to describe in-vitro cell-survival assays in the presence of chemotherapeutic agents. The radiation LQ model is
\[S_{\mathrm{RT}}(d)=e^{-\left(\alpha d + \beta d^2 \right)}, \]
where $S_{\mathrm{RT}}$ denotes the fraction of cells surviving radiation dose $d$ and $\alpha$ and $\beta$ are cell-specific parameters. In addition, in-vitro assay studies suggest an exponential cell-survival curve for the chemotherapeutic agent as follows:
\[S_{\mathrm{CT}}(c) = e^{-\gamma c}, \]
where $S_{\mathrm{CT}}$ denotes the fraction of cells surviving drug concentration $c$ and $\gamma$ is a cell-specific parameter. The unit of the parameter $\gamma$ is chosen such that $\gamma c$ is a dimensionless quantity. A non-interactive cooperation between the chemotherapeutic agent and radiation yields a survival fraction of
\begin{align*}
	 S_{\mathrm{RT}}(d)S_{\mathrm{CT}}(c) = e^{-\left(\alpha d + \beta d^2 +\gamma c\right)}
\end{align*}
if cells are exposed to radiation dose $d$ and drug concentration $c$. Furthermore, in-vitro assay experiments on radio-sensitization indicate an increase in the parameter $\alpha$ of the LQ survival curve, where the extent of this increase is proportional to the drug's concentration. Therefore, assuming a linear increase in the parameter $\alpha$, the additivity and radio-sensitization mechanisms yield the following survival curve
\begin{align*}
	S_{\mathrm{CRT}}(d,c)& \approx e^{-\left(\left(\alpha+ m c\right) d + \beta d^2 +\gamma c\right)}\\
	& \approx  e^{-\left(\alpha d + \beta d^2 +m c d + \gamma c \right)},
\end{align*}
where $S_{\mathrm{CRT}}$ is the fraction of cells surviving exposure to radiation dose $d$ and drug concentration $c$. The unit of the parameter $m$ is chosen such that $mc$ is in units of 1/Gy. The definition of the radiation BED model was originally motivated as the absolute value of the exponent of the LQ model divided by the linear parameter $\alpha$, that is, $-\log S_{\mathrm{RT}}(d)/\alpha$. Similarly, we assume an extended BED model for concurrent CRT as follows:
\begin{align*}
\frac{-\log S_{\mathrm{CRT}}(d,c)}{\alpha} &=d \left(1+\frac{d}{\ab} \right)+\frac{m }{\alpha}c d + \frac{\gamma}{\alpha} c \\
	&= d \left(1+\frac{d}{\ab} \right) + \ma cd + \ca c,
\end{align*}
where $\ab$, $\ma$, and $\ca$ are tissue-specific parameters.

\section{Accounting for a general normal-tissue structure} \label{sec: appendix_intermNT}
Formulation (M) is developed for treatment sites in which the dose-limiting normal tissue has a ``perfectly" parallel or serial organ structure. Therefore, the solution methods discussed in Sections \ref{sec: radiation}--\ref{sec: combined} are only applicable to these two cases. In this section, we present an approximate formulation that allows for the application of the developed solution methods to treatment sites with a general normal-tissue structure. In order to account for other organ structures the constraint in (\ref{eqn: BED}) needs to be extended to limit the generalized mean of the BED distribution in the normal tissue. More specifically, let $z_v$ represent the BED deposited in voxel $v\in V$ over the course of the treatment, that is,
\begin{align}
	z_v =  \sum_{i=1}^n B_{\mathrm{N}}\left(\delta_vd_{i},c_{i}\right). \label{eqn: BED_dist}
\end{align}
To account for a dose-limiting normal tissue with an organ structure other than parallel or serial, the constraint in (\ref{eqn: BED}) needs to be substituted with a new constraint that limits the generalized mean of the BED distribution as follows:
\begin{align}
 	\left(\frac{1}{|V|} \sum_{v  \in V} z_v^{a} \right)^{1/a} \leq \en, \label{eqn: generalizedBED} 
\end{align}
where $a\geq 1$ is a tissue-specific parameter. The parameter $a$ has been determined and reported for different organs \citep{Marks2010}. This yields an extension to (M) in which the constraint in (\ref{eqn: BED}) is replaced with (\ref{eqn: BED_dist})--(\ref{eqn: generalizedBED}). The two cases of parallel and serial organs that were previously discussed correspond to $a=1$ and $a=\infty$, respectively, which represent the average and maximum of the BED distribution as follows:
\begin{align}
	\left(\frac{1}{|V|} \sum_{v  \in V} z_v^{a} \right)^{1/a} = \begin{cases} \frac{1}{|V|} \sum_{v  \in V} z_v & a = 1; \\   \max_{v\in V}  z_v & a = \infty.  \end{cases}  \label{eqn: dualCase}
\end{align}
In order to apply our developed solution methods to the new formulation, we first use the method introduced in \cite{thieke2002} to approximate the generalized mean of the BED distribution using the convex combination of the average and maximum values. More specifically,
\begin{align}
	\left(\frac{1}{|V|} \sum_{v  \in V} z_v^{a} \right)^{1/a} \approx \lambda  \left(\frac{1}{|V|} \sum_{v  \in V} z_v \right) + \left(1-\lambda \right) \left(  \max_{v\in V}  z_v\right), \label{eqn: convex_com}
\end{align}  
where $\lambda \in \left[0,1\right]$ is the convex-combination coefficient chosen based on the $a$-parameter value (see \cite{thieke2002} for more details). Since $z_v\,(v\in V)$ increases as $\delta_v$ increases, we have $\argmax_{v\in V}  \,z_v = \argmax_{v \in V} \,\delta_v$. Let $\delta_{\max}$ represent the maximum sparing factor over all voxels $v\in V$. Substituting (\ref{eqn: convex_com}) in (\ref{eqn: generalizedBED}) and $z_v\,(v\in V)$ from (\ref{eqn: BED_dist}) yields an approximate constraint on the generalized mean of the BED distribution as follows:
\begin{align}
  \gamma  \left(\frac{1}{|V|} \sum_{v  \in V} \sum_{i=1}^n B_{\mathrm{N}}\left(\delta_vd_{i},c_{i}\right) \right) + \left(1-\gamma \right) \sum_{i=1}^n B_{\mathrm{N}}\left(\delta_{\max}d_{i},c_{i}\right) \leq \en. \label{eqn: approx}
\end{align}
Hence, replacing the constraint (\ref{eqn: BED}) of formulation (M) with (\ref{eqn: approx}) yields an approximation problem that allows for a general organ structure. In particular, it is easy to see that an argument similar to Lemma \ref{lemma: binding} applies to (\ref{eqn: approx}) requiring it to be binding at optimality. Thus, the expression in (\ref{eqn: obj}) for the objective function of (M) is also applicable to the approximation problem via substituting $\overline{\delta^2}$ with $\lambda \overline{\delta^2} + \left(1-\lambda \right) \delta^2_{\max}$ and $\bar{\delta}$ with $\lambda \bar{\delta} + \left(1-\lambda \right) \delta_{\max}$. This, in turn, allows for the application of the previously derived methods in Sections \ref{sec: radiation}--\ref{sec: combined} to the approximation problem.

\section{Proof of Theorem \ref{theorem: additive}} \label{sec: appendix_additive}
Using equations (\ref{eqn: obj_additive})--(\ref{eqn: D_additive}), the objective function of (M) can be rewritten as a univariate function, denoted by $F$, in terms of variable $b$ (the total BED delivered to the normal tissue by the chemotherapeutic agent) as follows:
\begin{align*}
	 F\left(b;n \right) = \Delta_{\mathrm{r}} D_{\mathrm{a}}\left(b;n \right) + \Delta_{\mathrm{a}} b,
\end{align*}
where the constant term $\frac{1}{\overline{\delta^2}}\frac{\abn}{\abt}\en$ is dropped from the objective function. We consider four possible cases based on the sign of $\Delta_{\mathrm{r}}$ and $\Delta_{\mathrm{a}}$ as follows:

\begin{itemize}
\item [(i)] $\Delta_{\mathrm{r}} \geq 0$ and $\Delta_{\mathrm{a}} < 0$: $F\left(\cdot;n\right)$ is maximized if $b=0$, which corresponds to no drug administration, that is, $c_i = 0 \, \left(i=1,\ldots,n \right)$. Furthermore, since $\Delta_{\mathrm{r}} \geq 0$, a standard radiation fractionation is optimal.

\item[(ii)]  $\Delta_{\mathrm{r}} < 0 $ and $\Delta_{\mathrm{a}} \geq 0$: $F\left(\cdot;n\right)$ is maximized if $b=\en$, which corresponds to no radiation administration, that is, $d_i=0 \, \left(i=1,\ldots,n\right)$.

\item[(iii)]  $\Delta_{\mathrm{r}} < 0$ and $\Delta_{\mathrm{a}} < 0$: $F(\cdot;n)$ is a convex function if $\Delta_{\mathrm{r}} < 0$  since
\begin{align*}
	F^{\prime\prime}\left(b;n \right) = -\frac{2\sfe\Delta_{\mathrm{r}}}{n\bar{\delta}^2\abn\left(1+4\sfe\left(\en - b\right) /n \bar{\delta} \abn \right)^{3/2}} >  0.
\end{align*}
Therefore, maximizing $F(\cdot;n)$  with respect to the single variable $b\in \left[0,\en\right]$ yields an optimal solution on the boundary of the feasible region, that is, $b^*=0$ or $b^*=\en$. The boundary solutions are
\begin{align*}
	F(0;n) &= \Delta_{\mathrm{r}}   \frac{-1 + \sqrt{1+4\sfe\en /n\bar{\delta}\abn}}{2\sfe/n\abn} \\
	F\left( \en;n\right) &=  \Delta_{\mathrm{a}} \en.
\end{align*}
Moreover, since $\Delta_{\mathrm{r}} < 0$, $F(\cdot;n)$ is maximized if a hypo-fractionated regimen is used, that is, n is set to the minimum number of fractions allowed $n=1$. Hence, to determine the maximum of the two values above we let
\begin{align*}
	\varrho_{\mathrm{a}} = \frac{2\sfe\en/ \abn} {-1 + \sqrt{1+4\sfe\en / \bar{\delta}\abn}},
\end{align*}
so that if $\Delta_{\mathrm{r}}/ \Delta_{\mathrm{a}}\leq \varrho_{\mathrm{a}}$, then radiation alone with a hypo-fractionation scheme is optimal. Otherwise, drug administration alone is optimal.

\item[(iv)] $\Delta_{\mathrm{r}} \geq 0$ and $\Delta_{\mathrm{a}} \geq 0$: $F\left(\cdot;n\right)$ is a concave function since
\begin{align*}
	F^{\prime\prime}\left(b;n \right) = -\frac{2\sfe\Delta_{\mathrm{r}}}{n\bar{\delta}^2\abn\left(1+4\sfe\left(\en - b\right) /n \bar{\delta} \abn \right)^{3/2}} <  0.
\end{align*}
Moreover, it has a local maximum that can be obtained by setting $F^{\prime}\left(b;n\right) = 0$, which yields
\begin{align*}
	b_{\mathrm{a}} = \en + \frac{n\abn}{4\overline{\delta^2}} \left(\bar{\delta}^2 - \left(\frac{\Delta_{\mathrm{r}}}{\Delta_{\mathrm{a}}} \right)^2\right).
\end{align*}
Clearly, $F\left(\cdot;n\right)$ is increasing for $b \leq b_{\mathrm{a}}$ and decreasing over $b > b_{\mathrm{a}}$. Thus, depending on whether or not $b_{\mathrm{a}}$ lies within the interval $\left[0,\en \right]$, we can determine the global maximum of $F$ over $\left[0,\en\right]$ as follows:
\begin{align*}
	b^* = \begin{cases}
	0 & b_{\mathrm{a}} < 0 \\
	b_{\mathrm{a}}& 0 \leq b_{\mathrm{a}} \leq \en \\
	 \en & b_{\mathrm{a}} > \en.	
\end{cases}
\end{align*}
The above conditions can be expressed in terms of $\Delta_{\mathrm{r}}/\Delta_{\mathrm{a}}$ as follows:
\begin{align*}
    b^* = \begin{cases}
    \en & \frac{\Delta_{\mathrm{r}}}{\Delta_{\mathrm{a}}} \leq \bar{\delta} \, (=\underline{\rho}_{\mathrm{a}})\\
     b_{\mathrm{a}} & \bar{\delta} \leq \frac{\Delta_{\mathrm{r}}}{\Delta_{\mathrm{a}}} \leq \bar{\delta}\sqrt{1+4\en\sfe/n\bar{\delta}\abn}\, (=\overline{\rho}_{\mathrm{a}})\\
    0 & \frac{\Delta_{\mathrm{r}}}{\Delta_{\mathrm{a}}} \geq \overline{\rho}_{\mathrm{a}}.
\end{cases}
\end{align*}
Therefore, the optimal regimen delivers a BED of $b^*$ and $\en-b^*$ using the drug and radiation administration, respectively. Finally, since $\Delta_{\mathrm{r}} \geq 0$, it is optimal to deliver radiation using a standard fractionation scheme, that is, using maximum number of fractions allowed. 
\end{itemize}

\section{Proof of Proposition \ref{prop: equalFrac}} \label{sec: appendix_stationary}
We show that there exists an optimal regimen $\big(d^*_i,c^*_i\big)\,\left(i=1,\ldots,n\right)$ to (M) in which for any given fractions $i,i^{\prime}$ such that $d^*_i, d^*_{i^{\prime}} > 0$, we have $d^*_i = d^*_{i^{\prime}}$ and $c^*_i = c^*_{i^{\prime}}$. To that end, we use the Karush-Kuhn-Tucker (KKT) optimality conditions, which are necessary conditions for local optimality if some constraint qualification is met \citep{bazaraa2006}. In particular, the linear-independence constraint qualification is satisfied at those solutions for which the constraint in (\ref{eqn: BED}) is binding, which is the case for all optimal solutions (see Lemma \ref{lemma: binding}). This requires the existence of fraction $i$ for which $d_i > 0$ since $\can = 0$ due to the lack of any additivity effect. This, in turn, leads to the linear independence of gradients of active constraints in (M). To write the KKT conditions we associate dual multiplier $\nu$ with (\ref{eqn: BED}), $\overline{\mu}_i\,(i=1,\ldots,n)$ with (\ref{eqn: drug}), $\underline{\mu}_i\,(i=1,\ldots,n)$ with non-negativity constraints on the drug's concentration in (\ref{eqn: nonneg}), and $\pi_i\,(i=1,\ldots,n)$ with non-negativity constraints on radiation doses in (\ref{eqn: nonneg}). The KKT conditions for (M) can then be expressed as follows:
\begin{align}
	&(\ref{eqn: BED})-(\ref{eqn: nonneg}) \notag \\
	&-\left(1+\mat c_i +\frac{2d_i}{\abt} \right) + \nu \left(1+\man c_i + \frac{2 \sfe  d_i}{ \abn}  \right) - \pi_i = 0 & i&=1,\ldots, n \label{kkt1}\\
	&-\mat d_i + \nu \man d_i - \underline{\mu}_i + \overline{\mu}_i  = 0  & i&=1,\ldots, n  \label{kkt2}\\
	&\nu \left(\sum_{i=1}^n \left( d_i + \frac{\sfe d_i^2}{\abn} + \man c_i d_i \right) -  \en/\bar{\delta}  \right) = 0  \label{kkt3}\\
	&\pi_id_i = 0  & i&=1,\ldots, n  \label{kkt4}\\
	&\underline{\mu}_ic_i = 0  & i&=1,\ldots, n  \label{kkt5}\\
	&\overline{\mu}_i \left(\bar{c}-c_i\right) = 0  & i&=1,\ldots, n  \label{kkt6} \\
	& \pi_i, \underline{\mu}_i, \overline{\mu}_i,  \nu \geq 0 & i&=1,\ldots, n.  \label{kkt7}
\end{align}
We show that any treatment regimen that satisfies the KKT conditions above is stationary. Consider treatment regimen $\left(d^*_i,c^*_i \right)\,\left(i=1,\ldots,n\right)$ that satisfies the KKT conditions, and suppose $d^*_i,d^*_{i^{\prime}} > 0$ for fractions $i$ and $i'$. In the following, we show that $d^*_i = d^*_{i^{\prime}}$ and $c^*_i=c^*_{i^{\prime}}$ for all possible cases:
\begin{itemize}
\item[(i)] $d^*_i, d^*_{i'} > 0$ with $c^*_i = c^*_{i'} = 0$: in that case (\ref{kkt4}) enforces $\pi^*_i =  \pi^*_{i'} = 0$. Thus using (\ref{kkt1}) and assuming $\frac{\abn}{\abt}\neq \sfe$ we have
\[ \frac{1+ 2d^*_i/ \abt}{1 + 2 \sfe d^*_i/\abn} = \frac{1 + 2d^*_{i'}/\abt}{1+ 2 \sfe d^*_{i'}\abn} \rightarrow d^*_i = d^*_{i'}.\]
\item[(ii)] $d^*_i, d^*_{i'} > 0$ with $c^*_i,c^*_{i^{\prime}} = \bar{c}$: in that case, similar to the above, using (\ref{kkt1}) and assuming that $\frac{\abn}{\abt}  \neq \sfe$ we have
\[ \frac{1+ \mat c^*_i +2d^*_i/\abt}{1 + \man c^*_i +2 \sfe d^*_i/\abn} = \frac{1+ \mat c^*_{i} + 2d^*_{i'}/\abn}{1+ \man c^*_{i'} + 2 \sfe d^*_{i'}/\abn} , c^*_{i} = c^*_{i'} = \bar{c}\rightarrow d^*_i = d^*_{i'}.\]
\item[(iii)] $d^*_i, d^*_{i'} > 0$ with $0 < c^*_i,c^*_{i'} < \bar{c}$: in that case (\ref{kkt4}), (\ref{kkt5}), and (\ref{kkt6}) enforce that $\pi^*_i =  \pi^*_{i'} = \underline{\mu}^*_i  = \underline{\mu}^*_{i'}= \overline{\mu}^*_i = \overline{\mu}^*_{i'} = 0$. Thus using (\ref{kkt2}) we have
\[ \left(\nu \man - \mat \right) d^*_i =  \left(\nu \man - \mat \right) d^*_{i'} = 0 \rightarrow \nu = \frac{\mat}{\man}. \]
Moreover, using (\ref{kkt1}) we have
\begin{align*}
\frac{1+ \mat c^*_i +2d^*_i/\abt}{1 + \man c^*_i +2 \sfe d^*_i/\abn} = \frac{1 + \mat c^*_{i'} + 2d^*_{i'}/\abt}{1+ \man c^*_{i'} + 2 \sfe d^*_{i'}/\abn}=\frac{\mat}{\man} \\ \rightarrow
d^*_{i} = d^*_{i'} = \frac{\man - \mat}{2\left(\sfe \mat /\abn - \man /\abt \right)} \rightarrow c^*_{i} = c^*_{i'}.
\end{align*}
\item[(iv)] $d^*_i, d^*_{i'} > 0$ with $c^*_i = 0$ and $c^*_{i^{\prime}} = \bar{c}$: this solution does not satisfy the KKT conditions, and we prove this claim by contradiction. Suppose such solution satisfies the KKT conditions, then using (\ref{kkt5}) and (\ref{kkt6}) we have $ \overline{\mu}_i =  \underline{\mu}_{i^{\prime}} = 0$. Furthermore, using (\ref{kkt2}) we have
\[\underline{\mu}_i= d_i \left(-\mat + \nu \man \right),\, -\overline{\mu}_{i^{\prime}} = d_{i^{\prime}} \left(-\mat + \nu \man \right). \]
However, $\underline{\mu}_i$ and $\overline{\mu}_{i^{\prime}}$ in the above cannot simultaneously satisfy nonnegativity conditions required by (\ref{kkt7}) unless $\nu=\mat/\man$. However, we showed in case (iii) that $\nu=\mat/\man$ would imply $d^*_i = d^*_{i^{\prime}}$ and $c^*_i = c^*_{i^{\prime}}$, which is a contradiction.
\item[(v)] $d^*_i, d^*_{i'} > 0$ with either (v-i) $c^*_i = 0$ and $0 < c^*_{i^{\prime}} < \bar{c}$ or (v-ii) $c^*_i = \bar{c}$ and $0 < c^*_{i^{\prime}} < \bar{c}$. We prove that neither (v-i) nor (v-ii) satisfy the KKT conditions by contradiction. We only discuss the case of (v-i) and a similar argument applies to (v-ii). Suppose the solution satisfied the KKT conditions, then (\ref{kkt4}), (\ref{kkt5}), and (\ref{kkt6}) would enforce $\pi^*_i =  \pi^*_{i'}  = \overline{\mu}^*_i =\underline{\mu}^*_{i'}=  \overline{\mu}^*_{i'} = 0$. Thus using (\ref{kkt2}) we have
\[ \left(\nu \man - \mat \right) d^*_{i'} = 0 \rightarrow \nu = \frac{\mat}{\man}.\]
Substituting $\nu$ in (\ref{kkt1}) yields
\begin{align*}
\frac{1 +2d^*_i/\abt}{1 +2 \sfe d^*_i/\abn} = \frac{1 + \mat c^*_{i'} + 2d^*_{i'}/\abt}{1+ \man c^*_{i'} + 2 \sfe d^*_{i'}/\abn}=\frac{\mat}{\man} \\ \rightarrow
d^*_{i} = d^*_{i'} = \frac{\man - \mat}{2\left(\sfe \mat /\abn - \man /\abt \right)} \rightarrow c^*_{i'} = 0
\end{align*}
which is a contradiction. 
\end{itemize}
Therefore, it is shown that for all treatment regimens that satisfy the KKT conditions, if there exist $i,i^{\prime}$ such that $d^*_i , d^*_{i^{\prime}} > 0$, then we have $d^*_i = d^*_{i^{\prime}}$ and $c^*_i = c^*_{i^{\prime}}$.

\section{Proof of Theorem \ref{theorem: synergistic}} \label{sec: appendix_synergistic}
Considering only stationary regimens, the objective function of (M) in equation (\ref{eqn: obj_synergistic}) can be expressed as a univariate function, denoted by $G$, in terms of variable $c$ (the stationary drug concentration) as follows:
\[ G\left(c;n\right)= \left(\Delta_{\mathrm{r}} + \man \Delta_{\mathrm{s}} c \right) D_{\mathrm{s}}\left(c;n\right),\]
where the constant term $\frac{1}{\overline{\delta^2}}\frac{\abn}{\abt}\en$ is dropped from the objective function. $G(\cdot;n)$ has only a single extremum
\[c_{\mathrm{s}}=\frac{\overline{\rho}_{\mathrm{s}}^2-\left(\Delta_{\mathrm{r}}/\Delta_{\mathrm{s}}\right)^2}{ 2 \man \left(\Delta_{\mathrm{r}} / \Delta_{\mathrm{s}} - 1\right)},\]
which is obtained by setting $G^{\prime}(\cdot;n) = 0$ and thus is a unimodal function. To determine the maximum of $G(\cdot;n)$ over the interval $\left[0,\bar{c}\right]$, we evaluate $G^{\prime}(\cdot;n)$ at the end points of the feasible region $\left[0,\bar{c}\right]$. We consider the following four possible cases depending on the sign of $\Delta_{\mathrm{r}}$ and $\Delta_{\mathrm{s}}$:

\begin{itemize}
\item [(i)] $\Delta_{\mathrm{r}}, \Delta_{\mathrm{s}} \geq 0$:
$G^{\prime}(0;n) \geq 0$ if and only if
\[ \frac{\Delta_{\mathrm{r}} }{ \Delta_{\mathrm{s}}} \leq  \sqrt{1+4\sfe \en / n\bar{\delta}\abn}\; (=\overline{\rho}_{\mathrm{s}}),\] 
similarly, $G^{\prime}\left(\bar{c};n\right) \geq 0$ if and only if
\[\frac{\Delta_{\mathrm{r}} }{ \Delta_{\mathrm{s}}} \leq   \sqrt{\left(1+\man \bar{c}\right)^2+4\sfe\en / n \bar{\delta}\abn} - \man\bar{c}\; (=\underline{\rho}_{\mathrm{s}}).\]
It is then easy to verify that
\begin{align}
	 \sqrt{\left(1+\man \bar{c}\right)^2+4\sfe\en / n \bar{\delta}\abn} - \man\bar{c} \leq  \sqrt{1+4\sfe \en / n\bar{\delta}\abn}. \label{eqn: bounds}
\end{align}
Based on the above, if
\[\frac{\Delta_{\mathrm{r}} }{ \Delta_{\mathrm{s}}} \leq  \underline{\rho}_{\mathrm{s}},\]
then $G(\cdot;n)$ is increasing over $\left[0,\bar{c}\right]$ and $c^*=\bar{c}$. Next, consider the case for which $G^{\prime}(0;n) \geq 0$ and $G^{\prime}\left(\bar{c};n\right) \leq 0$, this corresponds to
\begin{align*}
	 \underline{\rho}_{\mathrm{s}} \leq \frac{\Delta_{\mathrm{r}} }{ \Delta_{\mathrm{s}}} \leq  \overline{\rho}_{\mathrm{s}}.
\end{align*}
In this case the extremum $ c_{\mathrm{s}} \in [0,\bar{c}]$ is a local maximum and thus the global optimal solution. We next consider the case in which $G^{\prime}(0;n) \leq 0$ and $G^{\prime}\left(\bar{c};n\right) \geq 0$. However, this is not possible since it requires $\underline{\rho}_{\mathrm{s}} \geq  \overline{\rho}_{\mathrm{s}}$, which contradicts (\ref{eqn: bounds}). Finally, we consider $G^{\prime}(0;n)  \leq 0$ and $G^{\prime}\left(\bar{c};n\right) \leq 0$, which corresponds to
\begin{align*}
	\frac{\Delta_{\mathrm{r}} }{ \Delta_{\mathrm{s}}} & \geq   \overline{\rho}_{\mathrm{s}}.
\end{align*}
In this case $G^{\prime}(\cdot;n)$ is negative and $G(\cdot;n)$ is decreasing over $[0,\bar{c}]$. Hence, the optimal solution is at $c^*=0$. In addition, since $\Delta_{\mathrm{r}}, \Delta_{\mathrm{s}} \geq 0$,  we have $\Delta_{\mathrm{r}}+\Delta_{\mathrm{s}}c \geq 0$ for $c \in \left[0,\bar{c}\right]$ and thus a standard fractionation regimen is optimal.

\item[(ii)]  $\Delta_{\mathrm{r}} \leq 0$ and $\Delta_{\mathrm{s}} \geq 0$: similar to case (i) $ G^{\prime}\left( c;n\right) \geq 0$ for $c\in \left[0,\bar{c}\right]$ if and only if
\[\frac{\Delta_{\mathrm{r}}}{\Delta_{\mathrm{s}}} \leq   \sqrt{\left(1+\man c\right)^2+4\sfe\en / n \bar{\delta}\abn} - \man c.\]
This condition is met over the interval $[0,\bar{c}]$ because $\Delta_{\mathrm{r}} /\Delta_{\mathrm{s}} \leq 0$ and the right-hand-side of the above inequality is always positive. Therefore, $G$ is increasing over $[0,\bar{c}]$ and the optimal solution is at $c^*=\bar{c}$. Additionally, when
\[ \frac{\Delta_{\mathrm{r}} }{\Delta_{\mathrm{s}}} < -\man \bar{c},\]
then $\Delta_{\mathrm{r}} +  \man\Delta_{\mathrm{s}}\bar{c} < 0$ and thus a hypo-fractionated radiation regimen is optimal. On the contrary, when the above condition is not satisfied, 
then $\Delta_{\mathrm{r}} +  \man\Delta_{\mathrm{s}}\bar{c} \geq 0$ and a standard fractionation regimen is optimal.

\item[(iii)]  $\Delta_{\mathrm{r}} \leq 0$ and $\Delta_{\mathrm{s}} \leq 0$: $G^{\prime}(0;n) \geq 0$ if and only if
\[ \frac{\Delta_{\mathrm{r}} }{ \Delta_{\mathrm{s}}} \geq   \overline{\rho}_{\mathrm{s}},\] 
and $G^{\prime}\left(\bar{c};n\right) \geq 0$ if and only if
\[\frac{\Delta_{\mathrm{r}} }{ \Delta_{\mathrm{s}}} \geq  \underline{\rho}_{\mathrm{s}}.\]
Therefore, it is not possible to have $G^{\prime}(0;n) \geq 0$ and $G^{\prime}(\bar{c};n) \leq 0$ since in that case we would have $\underline{\rho}_{\mathrm{s}} \geq  \overline{\rho}_{\mathrm{s}}$, which is in contradiction with inequality (\ref{eqn: bounds}). Thus, there is no local maximum within the interval $\left(0,\bar{c}\right)$ and the optimal solution must lie on the boundary of $[0,\bar{c}]$, that is, $c^* = 0$ or $c^* = \bar{c}$. More specifically, the global maximum is
\begin{align*}
G^* &= \max \Big\{G(0;n),G(\bar{c};n) \Big\} \\
	& = \max\Big\{\Delta_{\mathrm{r}} D_{\mathrm{s}}\left(0;n\right), \left(\Delta_{\mathrm{r}} + \man\Delta_{\mathrm{s}} \bar{c}\right) D_{\mathrm{s}}\left(\bar{c};n\right) \Big\},	
\end{align*}
which can be determined using the following condition: if
\[\frac{\Delta_{\mathrm{r}}}{\Delta_{\mathrm{s}}} \leq \frac{\man\bar{c}}{  D_{\mathrm{s}}\left(0;n\right) / D_{\mathrm{s}}\left(\bar{c};n\right) - 1} \;(=\varrho_{\mathrm{s}}), \]
then $c^* = 0$; otherwise, $c^* = \bar{c}$. Finally, since $\Delta_{\mathrm{r}} + \man \Delta_{\mathrm{s}}c \leq 0$ for $c \geq 0$, a hypo-fractionated radiation regimen is always optimal, that is, $n=1$.

\item[(iv)] $\Delta_{\mathrm{r}} \geq 0$ and $\Delta_{\mathrm{s}} \leq 0$:
for any given $c \in \left[0,\bar{c}\right]$ we have $G^{\prime}\left(c;n\right) \geq 0$ if and only if
\[\frac{\Delta_{\mathrm{r}} }{ \Delta_{\mathrm{s}}} \geq   \sqrt{\left(1+\man c\right)^2+4\sfe\en / n \bar{\delta}\abn} - \man c \qquad \qquad \forall c \in \left[0,\bar{c}\right].\]
However, since $\Delta_{\mathrm{r}}/\Delta_{\mathrm{s}}\leq 0$ and the right-hand-side of the above inequality is positive, the above condition cannot be satisfied and $G^{\prime}\left(c;n \right)\leq 0$ for $c \in \left[0,\bar{c}\right]$.
Therefore, $G\left(c;n \right)$ is decreasing over the interval $\left[0, \bar{c}\right]$ and $c^* = 0$ is the global maximum. Moreover, at $c=0$,  $\Delta_{\mathrm{r}} + \man \Delta_{\mathrm{s}}c = \Delta_{\mathrm{r}} > 0 $ and thus a standard fractionation regimen is optimal.  
\end{itemize}

\section{Trade-off between target and normal-tissue BED} \label{sec: appendix_tradeoff}
The second derivative of $H$ with respect to $b$ is
\[	H^{\prime \prime} \left(b;n\right) = \frac{-2\sfe\Delta_{\mathrm{r}}}{\abn \bar{\delta}^2 \left( 1+4\sfe b / n\bar{\delta}\abn\right)^{3/2}}.\]
Therefore, it is easy to see that $H$ is convex when $\Delta_{\mathrm{r}} < 0$ and concave otherwise.

\end{appendices}
\end{document}